\newcommand\thickbar[1]{\accentset{\rule{.65em}{.8pt}}{#1}}
\newcommand*{\centerfloat}{%
  \parindent \z@
  \leftskip \z@ \@plus 1fil \@minus \textwidth
  \rightskip\leftskip
  \parfillskip \z@skip}
    \newtheorem{theorem}{Theorem}[section]
    \newtheorem{lemma}[theorem]{Lemma}
    \newtheorem{corollary}[theorem]{Corollary}
    \theoremstyle{definition}
    \newtheorem{definition}[theorem]{Definition}
\newenvironment{claim}[1]{\par\medskip\noindent\textbf{Claim:}\space#1}{}
\newenvironment{claimproof}[1]{\par\noindent\textit{Proof of claim.}\space#1}{\hfill $\blacksquare$}
\DeclareMathOperator{\Sp}{Sp}
\newcommand{\F}{\mathbb{F}}
\newcommand{\transp}{\mathsf{T}}
\newcommand{\CNOT}{\mathrm{CNOT}}
\newcommand{\CZ}{\mathrm{CZ}}
\newcommand{\SWAP}{\mathrm{SWAP}}
\newcommand{\Hgate}{H}
\newcommand{\Sgate}{S}
\begin{document}

\title{Enumerating all bilocal Clifford distillation protocols through symmetry reduction}
\author{S. Jansen}
\affiliation{Delft Institute of Applied Mathematics, Delft University of Technology, The Netherlands.}
\affiliation{Korteweg-de Vries Institute for Mathematics, University of Amsterdam, The Netherlands}
\author{K. Goodenough}
\affiliation{QuTech, Delft University of Technology, Lorentzweg 1, 2628 CJ Delft, The Netherlands}
\author{S. de Bone}
\affiliation{QuTech, Delft University of Technology, Lorentzweg 1, 2628 CJ Delft, The Netherlands}
\affiliation{QuSoft, CWI, Science Park 123, 1098 XG Amsterdam, The Netherlands}
\author{D. Gijswijt}
\affiliation{Delft Institute of Applied Mathematics, Delft University of Technology, The Netherlands.}
\author{D. Elkouss}
\affiliation{QuTech, Delft University of Technology, Lorentzweg 1, 2628 CJ Delft, The Netherlands}
\affiliation{Networked Quantum Devices Unit, Okinawa Institute of Science and Technology Graduate University, Okinawa, Japan}
\begin{abstract}
Entanglement distillation is an essential building block in quantum communication protocols. Here, we study the class of near-term implementable distillation protocols that use bilocal Clifford operations followed by a single round of communication. We introduce tools to enumerate and optimise over all protocols for up to $n=5$ (not necessarily equal) Bell-diagonal states using a commodity desktop computer. Furthermore, by exploiting the symmetries of the input states, we find all protocols for up to $n=8$ copies of a Werner state. For the latter case, we present circuits that achieve the highest fidelity with perfect operations and no decoherence. These circuits have modest depth and number of two-qubit gates. Our results are based on a correspondence between distillation protocols and double cosets of the symplectic group, and improve on previously known protocols.
\end{abstract}

\maketitle

\section{Introduction}
Entanglement is an essential resource for a host of quantum communication tasks, including but not limited to secret-key generation~\cite{bennett2020quantum}, conference key agreement~\cite{chen2005conference, ribeiro2018fully}, clock synchronisation~\cite{jozsa2000quantum}, and distributed quantum computation~\cite{grover1997quantum, cirac1999distributed, nickerson2013topological}. Due to experimental limitations, entanglement is in practice always noisy, i.e.~it has a non-unit fidelity with a target perfectly entangled state. A lower fidelity can lead to a lower rate at which one can perform certain tasks, or even yield their implementation impossible. Entanglement distillation is a set of procedures that increase the fidelity of the present entanglement by transforming  multiple copies of a lower fidelity entangled state into (usually) a smaller number of copies with higher fidelity~\cite{bennett1996purification,Bennett1996,Deutsch1996} (see~\cite{dur2007entanglement} for a review). 

Entanglement distillation has been studied in different settings. One such setting corresponds to the highly idealised scenario where one is given an asymptotic number of copies of a single state, and one can perform arbitrary local operations and classical communication (LOCC)~\cite{devetak2005distillation,buscemi2010distilling, leditzky2017useful}. Such protocols can in principle require an unbounded number of rounds of classical communication between Alice and Bob, rendering them infeasible in practice. A well-known explicit asymptotic protocol is the hashing protocol~\cite{bennett1996purification}. This protocol allows for the distillation to maximally entangled states at a finite rate, given that the given input state has a high enough input fidelity. Other scenarios include purification with the help of entanglement~\cite{vedral1999bound,ishizaka2004bound, sabat2020entanglement,sabat2020entanglementb}, on a single copy only, known as filtering~\cite{gisin1996hidden, verstraete2001local}, with environment assistance~\cite{buscemi2012general}, of higher-dimensional states~\cite{martin2003single, bombin2005entanglement} or with different classes of operations than LOCC~\cite{brandao2011one,regula2019one}. On the experimental side, entanglement distillation has been realised using photonic setups~\cite{kwiat2001experimental, pan2003experimental,yamamoto2003experimental,  walther2005quantum, chen2017experimental, ecker2021experimental} (where the distilled state is not stored in a memory afterwards), ions~\cite{reichle2006experimental} (where the distillation was performed within a single node) and NV-centres in diamond~\cite{kalb2017entanglement} (where the distilled states were heralded and stored in memories for further use).

Our goal is to find good distillation protocols with modest requirements. In particular, protocols where Alice and Bob use a small number of entangled states~\cite{Rozpdek2018,fang2019non}, and require only a single round of communication after performing their local operations~\cite{Rozpdek2018}. The above class of distillation protocols were first considered in~\cite{Rozpdek2018}, where they were called measure and exchange protocols.
The semidefinite programming bounds found by Rozpedek et al.~\cite{Rozpdek2018} allow to bound the optimal performance of measure and exchange protocols. Moreover, in some particular cases the existing protocols meet the bounds allowing to establish their optimality. 
Regarding the design of protocols, a heuristic procedure called the seesaw method allows to improve existing protocols~\cite{Rozpdek2018}. More recently, Krastanov et al.~investigated a genetic optimisation method for a subset of these protocols~\cite{krastanov2019optimized} and evaluated them including noisy operations. 

Here, complementary to previous work, we find a systematic procedure to obtain good measure and exchange protocols. 

To this end, we narrow down our investigation from general measure and exchange protocols to a practically relevant subset of protocols and states. We consider the distillation of Bell-diagonal states, where we use arbitrary noiseless \emph{bilocal Clifford circuits} and measure out all but one of the qubit pairs. The measurement results are communicated between Alice and Bob, and the protocol is deemed successful if all pairs had correlated outcomes. We call this class of protocols \emph{bilocal Clifford protocols} for short. This class of protocols includes a number of relevant protocols considered before in the literature~\cite{bennett1996purification,Deutsch1996,fujii2009entanglement,briegel1998quantum,dur2003entanglement,dur1999quantum,ruan2018adaptive,vollbrecht2005interpolation,krastanov2019optimized}. 

The restriction to bilocal Clifford protocols and Bell-diagonal states allows us to reduce the finding of all bilocal Clifford protocols to enumerating all (double) cosets $\mathcal{D}_n\backslash \Sp(2n, \F_2)/ \mathcal{K}_n$. Here, $\Sp(2n,\F_2)$ is the symplectic group over the field with two elements $\F_2$, $\mathcal{K}_n$ is the (possibly trivial) subgroup that preserves the input states and $\mathcal{D}_n$ is the distillation subgroup, which is the set of operations that leave both the success probability and fidelity invariant. One of our contributions in this work is to characterise this subgroup in terms of its generators and its order. We consider two cases for the input states - general input states (i.e.~trivial symmetry group) and the $n$-fold tensor product of Werner states. For general input states, we find all protocols for up to $n=5$ entangled pairs. For an $n$-fold tensor product of Werner states, we describe an algorithm that finds a complete set of double coset representatives. This allows us to optimise over all bilocal Clifford protocols when distilling an $n$-fold tensor product of a Werner state for $n$ up to $8$ pairs.

We find that for $n=2, 3$ copies of a Werner state, the highest fidelity out of all bilocal Clifford protocols is achieved by protocols studied before in the literature. For $n=4$ to $8$, we find increased fidelities over previously considered distillation schemes. Furthermore, we find explicit circuits achieving the highest fidelity out of all bilocal Clifford protocols, see Appendix \ref{sec:diststatistics}. These circuits have comparable depth and number of two-qubit gates as previously studied protocols, highlighting also the practical feasibility of our findings.

This paper is structured as follows. In Section ~\ref{sec:preliminaries} we describe the preliminaries and notation needed throughout the paper. Section~\ref{sec:bilocalClifford} explains bilocal Clifford distillation protocols and how the optimisation over such protocols can be rephrased as an optimisation over elements from the symplectic group $\Sp(2n,\F_2)$. In Section~\ref{sec:preservdiststat} we characterise the distillation subgroup $\mathcal{D}_n$. In Section~\ref{sec:wernerstatesection} we prove a further reduction of our search space when the state to be distilled is an $n$-fold tensor product of a Werner state. In Section~\ref{sec:results} we present our optimisation results. We end with conclusions and discussions in Section~\ref{sec:conc}.

\section{Preliminaries}
\label{sec:preliminaries}
We begin by setting some relevant notation. The field with two elements is denoted by $\F_2$. We use the notation $U_i$ to denote a single-qubit operation on qubit $i$. The single-qubit operations that we use are the Pauli gates ($I$, $X$, $Y$ and $Z$), the Hadamard gate ($H$) and the phase gate ($S$). Moreover, we denote by $\CNOT_{ij}$ a controlled-NOT operation with control qubit $i$ and target qubit $j$, by $\CZ_{ij}$ a controlled-Z operation between qubits $i$ and $j$ and by $\SWAP_{ij}$ the operation that swaps qubits $i$ and $j$.

\subsection{Pauli group and Clifford group}
\label{subsec:PauliClifford}
The Pauli matrices are defined as
\begin{equation}
    \begin{split}
    I &= \begin{bmatrix} 1 & 0 \\ 0 & 1 \end{bmatrix}, \\
    Y &= \begin{bmatrix} 0 & -i \\ i & 0 \end{bmatrix},
  \end{split}
\qquad
    \begin{split}
    X &= \begin{bmatrix} 0 & 1 \\ 1 & 0 \end{bmatrix},\\
    Z &= \begin{bmatrix} 1 & 0 \\ 0 & -1 \end{bmatrix}.
    \end{split}
    \label{eq:paulimatrices}
\end{equation}
The Pauli group with phases on $n$ qubits $\mathcal{\thickbar P}_n$ consists of all $2^n \times 2^n$ matrices of the form $\lambda P_1 \otimes \dots \otimes P_n$ with $\lambda \in \{\pm 1, \pm i\}$ and $P_i \in \{I, X, Y, Z\}$ for all $i \in \{1,\dots,n\}$, together with standard matrix multiplication. Of particular interest to us is the Pauli group without any phase factors, $\mathcal{ P}_n \cong \mathcal{\thickbar P}_n/\langle i I^{\otimes n}\rangle $. Here $\langle i I^{\otimes n}\rangle$ is the subgroup generated by $i I^{\otimes n}$. We will call this the Pauli group for short. An element of the group $\mathcal{ P}_n$ is referred to as a Pauli string (of length $n$).  The order of $\mathcal{P}_n$ equals $|\mathcal{P}_n| = 4^{n}$.

An important class of gates in quantum information theory are the so-called Clifford gates~\cite{gottesman1998theory}. Circuits composed of Clifford gates are efficiently classically simulable, yet can be used to create complex quantum states, which are used for example in stabiliser error correction. The Clifford gates on $n$ qubits form a group $\mathcal{C}_n$, and each $C \in \mathcal{C}_n$ induces an automorphism $f: \mathcal{\thickbar P}_n \rightarrow \mathcal{\thickbar P}_n$ on $\mathcal{\thickbar P}_n$ by conjugating each element with $C$, i.e.~$f(P) = CPC^{\dagger}$. The Clifford group $\mathcal{C}_n$ is generated by Hadamard- ($\Hgate_i$) and phase ($\Sgate_i$) gates on each qubit ($1\leq i \leq n$) and $\CNOT$ gates between every pair $(i, j)$ of qubits. In matrix representation, these gates are given by 

\begin{gather}
    \Hgate = \frac{1}{\sqrt{2}}\begin{bmatrix} 1 & 1 \\ 1 & -1 \end{bmatrix}, \qquad
    \Sgate = \begin{bmatrix} 1 & 0 \\ 0 & i \end{bmatrix},\\
        \qquad \CNOT = \begin{bmatrix} 1 & 0 & 0 & 0 \\ 0 & 1 & 0 & 0 \\ 0 & 0 & 0 & 1 \\ 0 & 0 & 1 & 0 \end{bmatrix}.
    \label{eq:cliffordmatrices}
\end{gather}

\subsection{Binary representation of the Pauli group and the Clifford group}
\label{subsec:binaryrep}
The elements of the Pauli group and the Clifford group can be described in terms of binary vectors and matrices, respectively. To see this, we first introduce the following notation for the Pauli matrices.
\begin{equation}
    \tau_{00} = I,\quad \tau_{10} = X,\quad \tau_{11} = iY,\quad  \tau_{01} = Z.
    \label{eq:binaryP1}
\end{equation}
\\
We extend this notation to tensor products of Pauli matrices as follows.
\begin{equation}
    \tau_v \coloneqq \tau_{v_1v_{n+1}}\otimes\dots\otimes\tau_{v_nv_{2n}},\quad v \in \mathbb{F}_2^{2n},
    \label{eq:binaryPn}
\end{equation}

As mentioned in Section \ref{subsec:PauliClifford}, the global phase factors are not important in the context of this paper, so an element $\lambda\tau_v$, $\lambda \in \{\pm 1, \pm i\}$, of $\mathcal{\thickbar P}_n$ can be represented by the binary vector $v \in \F^{2n}_2$. The multiplication of the elements of $\mathcal{\thickbar P}_n$  corresponds then to the addition of the binary vectors.

For any $C \in \mathcal{C}_n$, the conjugation map $f$ corresponds to a linear map on the set of binary vectors (and thus on $\mathcal{\thickbar P}_n$). The map $f$ is an automorphism, and thus preserves the commutation relations of the elements of $\mathcal{\thickbar P}_n$. To see what this implies for the linear transformation in the binary picture, let $v, w \in \mathbb{F}_2^{2n}$. Then
\begin{equation}
    \tau_{v}\tau_{w} = (-1)^{v^\transp \Omega w}\tau_{w}\tau_{v}, 
    \label{eq:productpaulistrings}
\end{equation}
where $\Omega = \begin{bmatrix} 0 & I_n \\ I_n & 0 \end{bmatrix}$. A proof of this formula can be found in Appendix \ref{sec:backgroundbinary}. 

Let $M$ denote the linear transformation corresponding to conjugation by $C$. It follows from equation~\eqref{eq:productpaulistrings} that
\begin{equation}
    \tau_{Mv}\tau_{Mw} = (-1)^{(Mv)^\transp \Omega Mw} \tau_{Mw}\tau_{Mv}.
    \label{eq:actionM}
\end{equation}

By equation~\eqref{eq:productpaulistrings}, we know that $\tau_v$ and $\tau_w$ commute iff $v^\transp\Omega w = 0$ and anti-commute iff $v^\transp\Omega w = 1$. In order to preserve the commutation relations, it must then hold that $v^\transp M^\transp \Omega Mw = v^\transp \Omega w$ for all $v, w \in \F_2^{2n}$, so $M^\transp \Omega M = \Omega$. The matrices $M$ that satisfy this condition thus preserve the so-called symplectic inner product $\omega(v, w) \equiv v^\transp\Omega w$ between any two $v, w \in \F_2^{2n}$. These matrices form a group known as the symplectic group over $\F_2$, denoted by $\Sp(2n, \F_2)$. The order of the symplectic group over $\F_2$ is well-known~\cite{Artin1957} to be equal to  
\begin{equation}
    |\Sp(2n, \F_2)| = 2^{n^2} \prod_{j=1}^n (4^j - 1).
    \label{eq:ordersymplectic}
\end{equation}

The symplectic complement of a subspace $V$ of $\F_2^{2n}$ is defined as the set of elements of $\F_2^{2n}$ that have zero symplectic inner product with all elements from $V$,

\begin{equation}
    V^{\perp} = \lbrace v \in \F_2^{2n}  \mid \omega(v, w) = 0~\forall w \in V \rbrace\ .
    \label{eq:sympcomplement}
\end{equation}

The symplectic complement satisfies the following property,
\begin{equation}
    \left(V^\perp\right)^\perp = V\ .
    \label{eq:sympcomplementperp}
    \end{equation}

Calculations involving a symplectic matrix $M$ can often be simplified by writing it as a block matrix $M = \begin{bmatrix} A & B \\ C & D\end{bmatrix}$, with $A, B, C, D \in M_{n \times n}(\F_2)$. From the condition $M^\transp \Omega M = \Omega$ it follows that the blocks satisfy
\begin{equation}
    \begin{split}
        & B^\transp D + D^\transp B = 0, \\ 
        & A^\transp C + C^\transp A = 0, \\ 
        & A^\transp D + C^\transp B = I_n.
    \end{split}
    \label{eq:symplecticblocks}
\end{equation}
Moreover, the inverse of $M$ is given by
\begin{equation}
    M^{-1} = \begin{bmatrix} D^\transp & B^\transp \\ C^\transp & A^\transp\end{bmatrix}.
    \label{eq:symplecticinverse}
\end{equation}

Let $\phi:\mathcal{C}_n\to \Sp(2n,\F_2)$ be the function that maps every Clifford gate to the corresponding symplectic matrix. This map is a surjective group homomorphism~\cite{Dehaene2003}. The symplectic group $\Sp(2n, \mathbb{F}_2)$ is thus generated by the images of a generating set of the Clifford group $\mathcal{C}_n$ under $\phi$.

\section{Bilocal Clifford protocols}
\label{sec:bilocalClifford}

\begin{figure}

\begin{tikzpicture}
\node[scale=1.25] at (-1.25,0.7){1)};
\node[scale=1.4] at (-1,0){$\bigotimes_{i=1}^{n}\rho_i$};
\node[scale=1.4] at (5.3,1/2){$A$};
\node[scale=1.4] at (5.3,-1/2){$B$};
\draw[line width=1.25] (0,0) -- (1,1/2);
\draw[line width=1.25] (5,1/2) -- (1,1/2);
\draw[line width=1.25] (0,0) -- (1,-1/2);
\draw[line width=1.25] (5,-1/2) -- (1,-1/2);
\node[draw, scale=1.25, fill=white] at (4.1,1/2){$C^{T}$};
\node[draw, scale=1.25, fill=white] at (4.1,-1/2){$C^{\dagger}$};
\node[anchor=east] at (5.5, 0) (node1){};
\node[anchor=east] at (5.5, -1.25) (node2){};
\draw[->, line width=0.3mm] (node1) to [out = -45, in = 45, looseness = 1] (node2);
\end{tikzpicture}
\vspace*{-5mm}

\begin{tikzpicture}
\node[scale=1.25] at (-1.25,0.7){2)};
\node[scale=1.4] at (-1,0){$\ket{\Phi^+}^{\otimes n}$};
\node[scale=1.4] at (5.3,1/2){$A$};
\node[scale=1.4] at (5.3,-1/2){$B$};
\draw[line width=1.25] (0,0) -- (1,1/2);
\draw[line width=1.25] (5,1/2) -- (1,1/2);
\draw[line width=1.25] (0,0) -- (1,-1/2);
\draw[line width=1.25] (5,-1/2) -- (1,-1/2);
\node[draw, scale=1.33, fill=white] at (3,-1/2){$\mathcal{N}_P$};
\node[draw, scale=1.25, fill=white] at (4.1,1/2){$C^T$};
\node[draw, scale=1.25, fill=white] at (4.1,-1/2){$C^{\dagger}$};
\node[anchor=east] at (5.5, 0) (node1){};
\node[anchor=east] at (5.5, -1.25) (node2){};
\draw[->, line width=0.3mm] (node1) to [out = -45, in = 45, looseness = 1] (node2);
\end{tikzpicture}
\vspace*{-5mm}

\begin{tikzpicture}
\node[scale=1.25] at (-1.25,0.7){3)};
\node[scale=1.4] at (-1,0){$\ket{\Phi^+}^{\otimes n}$};
\node[scale=1.4] at (5.3,1/2){$A$};
\node[scale=1.4] at (5.3,-1/2){$B$};
\draw[line width=1.25] (0,0) -- (1,1/2);
\draw[line width=1.25] (5,1/2) -- (1,1/2);
\draw[line width=1.25] (0,0) -- (1,-1/2);
\draw[line width=1.25] (5,-1/2) -- (1,-1/2);
\node[draw, scale=1.33, fill=white] at (3,-1/2){$\mathcal{N}_P$};
\node[draw, scale=1.25, fill=white] at (2,-1/2){$C$};
\node[draw, scale=1.25, fill=white] at (4.1,-1/2){$C^{\dagger}$};
\node[anchor=east] at (5.5, 0) (node1){};
\node[anchor=east] at (5.5, -1.25) (node2){};
\draw[->, line width=0.3mm] (node1) to [out = -45, in = 45, looseness = 1] (node2);
\end{tikzpicture}
\vspace*{-5mm}

\begin{tikzpicture}
\node[scale=1.25] at (-1.45,0.7){4)};
\node[scale=1.4] at (-1.25,0){$\ket{\Phi^+}^{\otimes n}$};
\node[scale=1.4] at (5.3,1/2){$A$};
\node[scale=1.4] at (5.3,-1/2){$B$};
\draw[line width=1.25] (0,0) -- (1,1/2);
\draw[line width=1.25] (5,1/2) -- (1,1/2);
\draw[line width=1.25] (0,0) -- (1,-1/2);
\draw[line width=1.25] (5,-1/2) -- (1,-1/2);
\node[draw, scale=1.33, fill=white] at (2.8,-1/2){$\mathcal{N}_{\tilde{P}}$};
\end{tikzpicture}
\caption{Schematic description of how bilocal Clifford circuits map $n$-qubit bipartite systems to $n$-qubit bipartite systems. From 1) to 2), we rewrite the state as $\otimes_{i=1}^n\rho_i = \left(I\otimes \mathcal{N}\right)\left(\ket{\Phi^+}^{\otimes n}\right)$, where $\mathcal{N}\left(\cdot \right) = \sum_{P\in \mathcal{P}_n} p_{P} P\left(\cdot\right)  P^{\dagger}$.
In 3), we use the fact that $A^\transp\otimes I\ket{\Phi^+}^{\otimes n} = I\otimes A\ket{\Phi^+}^{\otimes n}$ for any $2^n \times 2^n$ matrix $A$~\cite{wilde2011classical}. For 4), we use the fact the Cliffords act on the group of Pauli strings $\mathcal{P}_n$.}
\label{fig:pingpong}
\end{figure}

This section covers the structure of the distillation protocols that are considered in this paper. We consider a system consisting of two parties, Alice and Bob, that share $n$ entangled two-qubit states. We focus on states that are diagonal in the Bell basis. Bell-diagonal states naturally arise with realistic noise models such as dephasing and depolarizing. Moreover, any bipartite state can be \emph{twirled} into a Bell-diagonal state while preserving the fidelity~\cite{Bennett1996}. We note that the protocols found in our paper are also relevant to states that are not necessarily Bell diagonal --- the performance of a protocol on a Bell-diagonal and a not necessarily Bell-diagonal state will be comparable as long as the two states in question are close in trace distance.

Bell-diagonal states can be written as
\begin{equation}
\begin{split}
    \rho &= p_I\ket{\Phi^+}\bra{\Phi^+} + p_X\ket{\Psi^+}\bra{\Psi^+}\\ &+ p_Y\ket{\Psi^-}\bra{\Psi^-}+ p_Z\ket{\Phi^-}\bra{\Phi^-}.
\end{split}
\label{eq:belldiagonal}
\end{equation}
The indices of the probabilities arise from the following correspondence between the Bell states and the Pauli matrices.
\begin{equation}
    \begin{split}
        \ket{\Phi^+} &= \frac{1}{\sqrt{2}}\left(\ket{00}+\ket{11}\right) = \left(I \otimes I\right)\ket{\Phi^+},\\
        \ket{\Psi^+} &= \frac{1}{\sqrt{2}}\left(\ket{01}+\ket{10}\right) = \left(I \otimes X\right)\ket{\Phi^+},\\
        \ket{\Psi^-} &= \frac{1}{\sqrt{2}}\left(\ket{01}-\ket{10}\right) = \left(I \otimes -iY\right)\ket{\Phi^+},\\
        \ket{\Phi^-} &= \frac{1}{\sqrt{2}}\left(\ket{00}-\ket{11}\right) = \left(I \otimes Z\right)\ket{\Phi^+}.
    \end{split}
    \label{eq:bellstates}
\end{equation}
Equation~\eqref{eq:bellstates} gives rise to a bijective mapping from the Bell states $\ket{\Phi^+}$, $\ket{\Psi^+}$, $\ket{\Psi^-}$ and $\ket{\Phi^-}$ to the Pauli matrices $I$, $X$, $Y$ and $Z$, respectively. We denote a tensor product of $n$ Bell-diagonal states by a tensor product of Pauli matrices, e.g. $\ket{\Phi^+}\otimes\ket{\Psi^+}\otimes\ket{\Psi^-}\otimes\ket{\Phi^-}$ is denoted by $I\otimes X\otimes Y\otimes Z$.

We generalise the notation of equation \eqref{eq:belldiagonal} and denote by $p_P$ the probability that the system is in the state described by $P \in \mathcal{P}_n$. In the subscript we will not explicitly denote the tensor product, e.g. $p_{XY}$ denotes the probability that the system is described by $X \otimes Y$. The initial state of the protocol consisting of $n$ entangled two-qubit states can thus be fully described by the set of probabilities $\mathcal{Q} = \{p_{P_1P_2...P_n}: P_i \in \{I, X, Y, Z\}\}$. We refer to such a system as an \textit{$n$-qubit bipartite system}.
When working in the binary representation, we write $p_v$ instead of $p_P$, where $v$ is a binary vector and $p_v = p_P$ if $v$ is the binary representation of $P$. It will be clear from the context which convention is used.

\subsection{Bilocal Clifford circuits}
The first step of the protocol is the performance of bilocal Clifford operations. That is, if Alice applies a Clifford operation $\tilde{C} \in \mathcal{C}_n$ to her qubits, then Bob applies $\tilde{C}^*$, the entry-wise complex conjugate of $\tilde{C}$, to his qubits (see Fig.~\ref{fig:pingpong}). This leads to a permutation of the set $\mathcal{Q}$. In particular, each element $p_P$ of $\mathcal{Q}$ is mapped~\cite{wilde2011classical} to $p_{\tilde{C}^\transp P\tilde{C}^*}$, or equivalently, $p_{CPC^{\dagger}}$, where we defined $C = \tilde{C}^\transp \in \mathcal{C}_n$. We denote the probabilities that describe the permuted state by $\tilde{p}_{P_1P_2...P_n}$.

We note here that the most general permutation on $\mathcal{Q}$ by local unitaries consists of applying bilocal Cliffords followed by a Pauli string applied to either Alice or Bob's side~\cite{Dehaene2003}. These Pauli strings can be used to reorder locally the coefficients of the states.

Since (bilocal) Clifford operations form a group, the Clifford group has a group action on $\mathcal{Q}$. The (normal) subgroup of the Clifford group that fixes $\mathcal{Q}$ point-wise does not change any of the statistics, and is thus not of interest to us. This subgroup consists of all Pauli strings, and quotienting out the Cliffords by this subgroup leads to the symplectic group (over $\F_2$), $\Sp(2n, \F_2)$~\cite{Dehaene2003, dehaene2003clifford}. We can thus describe a bilocal Clifford operation by an element $M \in \Sp(2n, \mathbb{F}_2)$. To simplify notation, we sometimes slightly abuse the notation and denote by $C \in \Sp(2n,\mathbb{F}_2)$ the symplectic matrix corresponding to conjugation by $C \in \mathcal{C}_n$, but it should be kept in mind that always the symplectic matrix $M$ is meant.

\subsection{Measurements and postselection for bilocal Clifford protocols}
In the second step, Alice and Bob perform measurements in the computational basis on $n-1$ of their qubits. Alice and Bob report their results to each other using classical communication. If the outcomes are equal, they keep the state that was not measured. In this case, the protocol is called \emph{successful}. If the outcomes are not equal, they discard all states, and the protocol is not successful.
The probability that a protocol is successful is equal to the probability that all measured states are either in the $\ket{\Phi^+}$ or in the $\ket{\Phi^-}$ state, which correspond to the $I$ and $Z$ Pauli matrix, respectively. The success probability of the protocol is thus equal to
\begin{equation}
    p_\textrm{suc} = \sum_{\mathclap{\substack{P_1 \in \{I, X, Y, Z\},\\ Q_j \in \{I,Z\}}}} ~\tilde{p}_{P_1Q_2...Q_{n}},
    \label{eq:sucprobprotocol}
\end{equation}
where we used the convention that the first two-qubit state is not measured. Moreover, the fidelity between the remaining state and the $\ket{\Phi^+}$ state is equal to 
\begin{equation}
    F_\textrm{out} = \frac{\sum_{Q_j \in \{I,Z\}} \tilde{p}_{I_1Q_2...Q_{n}}}{p_\textrm{suc}}.
    \label{eq:fidelityprotocol}
\end{equation}

To simplify notation in the rest of this paper, we introduce the following two definitions.
\begin{definition}
    The \textit{base} of an $n$-qubit bipartite quantum system is given by 
    \begin{equation*}
        \begin{split}
            \mathscr{B} = \left\{v \in \mathbb{F}_2^{2n} \mid v_i = 0\ \forall i \in \{1,...,n+1\} \right\}.
        \end{split}
    \end{equation*}
    \label{def:base}
\end{definition}

Note that the base vectors correspond to the Pauli strings $I_1\otimes Q_2 \otimes... \otimes Q_{n} \in \mathcal{P}_n$ with $Q_j \in \{I,Z\}$ for all $j \in \{2,...,n\}$.

\begin{definition}
    The \textit{pillars} of an $n$-qubit bipartite quantum system are given by 
    \begin{equation*}
    \begin{split}
        \mathscr{P} = \left\{v \in \mathbb{F}_2^{2n} \mid v_i = 0\ \forall i \in \{2,...,n\}\right\}.
    \end{split}
    \end{equation*}
    \label{def:pillars}
\end{definition}
The elements of the pillars correspond to the Pauli strings $P_1 \otimes Q_2 \otimes...\otimes Q_{n} \in \mathcal{P}_n$ with $P_1 \in \{I, X, Y, Z\}$ and $Q_j \in \{I,Z\}$ for all $j \in \{2,...,n\}$. The naming of the base and pillars is made clear when the probabilities $p_P$ are ordered in an $n$-dimensional hypercube, where each dimension corresponds to a qubit pair, see Fig.~\ref{fig:baseandpillars}.

\begin{figure*}
    \centerfloat
    \begin{subfigure}{.9\textwidth}
        \centering
        \includegraphics[clip,  width=0.45\textwidth, trim = 3.0mm 3.0mm 3.1mm 0mm]{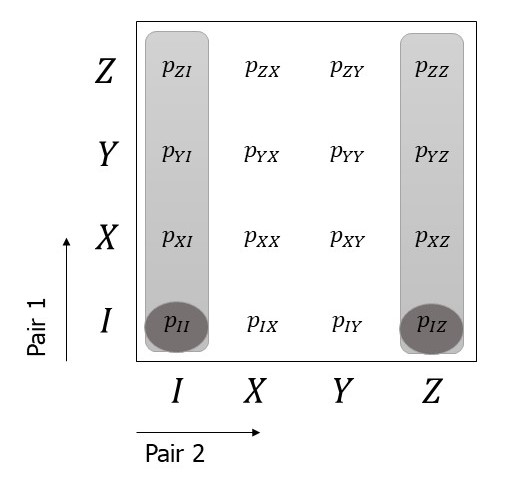}
        \caption*{}
        \label{fig:2qubitpairs}
    \end{subfigure}%
    \hskip -7cm
    \begin{subfigure}{.9\textwidth}
        \centerfloat
        \includegraphics[clip,  width=0.45\textwidth, trim = 3.0mm 3.0mm 3.1mm 0mm]{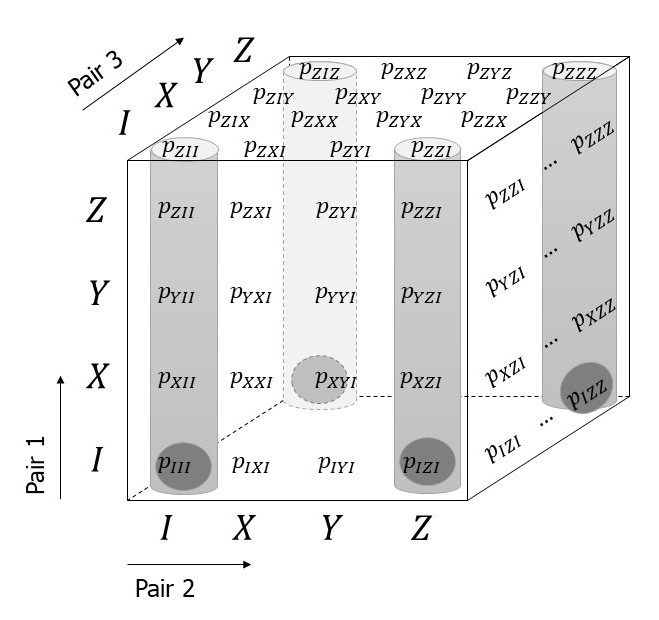}
        \caption*{}
        \label{fig:3qubitpairs}
    \end{subfigure}
    \vspace*{-6mm}
\caption{Probabilities that describe the state of a 2-pair system (left) and a 3-pair system (right). The light grey rectangles/cylinders highlight the probabilities that correspond to the pillars. The darker circles highlight the probabilities that correspond to the base. For the 3-pair system we have labelled here only the coefficients that are on the front, right and top face.}
\label{fig:baseandpillars}
\end{figure*}

Using these definitions, equation~\eqref{eq:sucprobprotocol} can be rewritten as
\begin{equation}
    p_\textrm{suc}= \sum_{v \in \mathscr{P}} \tilde{p}_v,
    \label{eq:sucprob}
\end{equation}
and equation~\eqref{eq:fidelityprotocol} as
\begin{equation}
    F = \frac{\sum_{v \in \mathscr{B}}\tilde{p}_v}{\sum_{v \in \mathscr{P}}\tilde{p}_v}.
    \label{eq:fidelity}
\end{equation}

The fidelity as defined above corresponds to the $I$ coefficient of the output state, and does not suffice to describe the output state completely. To describe the output state, we require the $X$, $Y$ and $Z$ coefficients as well. Similarly to equation~\eqref{eq:fidelity}, these coefficients $F_i$ are described in terms of the probabilities by
\begin{equation}
    F_i = \frac{\sum_{v \in \mathscr{B}+v_i}\tilde{p}_{v}}{\sum_{v \in \mathscr{P}}\tilde{p}_v}.
    \label{eq:othercoeff}
\end{equation}
Here, $v_i$ is $v_1=e_1, v_2=e_1+e_{n+1}, v_3=e_{n+1}$, corresponding to the $X$, $Y$ and $Z$ coefficients, respectively and we have used the standard basis vectors $\{e_i: i \in \{1,...,2n\}\}$ of $\mathbb{F}_2^{2n}$.

The fidelity, success probability and the three $F_i$ coefficients are referred to as the \textit{distillation statistics}. Importantly, we are not interested in permutations of the three $F_i$ coefficients, since they can be permuted arbitrarily by local operations after the measurement step.

In the binary picture, the distillation statistics can be calculated using the inverse of the symplectic matrix, which can be efficiently calculated using~\eqref{eq:symplecticinverse}. Let $M$ be the symplectic matrix corresponding to a permutation $P \mapsto CPC^\dagger$, $C \in \mathcal{C}_n$. We wish to determine which binary vectors are mapped to the vectors corresponding to the base and the pillars by $M$. Since $M$ permutes the binary vectors, this is equivalent to determining where the base and pillar vectors are mapped to by $M^{-1}$.

Finally, there is a direct analogy between our optimisation over bilocal Clifford protocols, and quantum error detection schemes of the form shown in Fig.~\ref{fig:errordetection}. Such schemes will detect as errors the set of Pauli strings that do not end up in the pillars after applying the Clifford circuit $C$. We will not pursue this further in this paper, however.

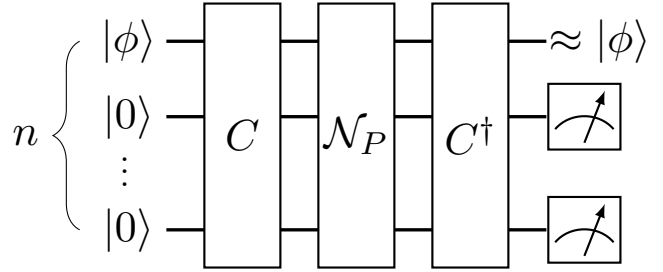
\begin{figure}
\centerfloat

\tikzset{meter/.append style={draw, inner sep=10, rectangle, font=\vphantom{A}, minimum width=30, line width=.8,
 path picture={\draw[black] ([shift={(.1,.3)}]path picture bounding box.south west) to[bend left=50] ([shift={(-.1,.3)}]path picture bounding box.south east);\draw[black,-latex] ([shift={(0,.1)}]path picture bounding box.south) -- ([shift={(.3,-.1)}]path picture bounding box.north);}}}

\begin{tikzpicture}
\node[scale=1.5] at (0+0.5,-1/2){$\ket{\phi}$};
\node[scale=1.5] at (6.7,-1/2){$\approx\ket{\phi}$};
\node[scale=1.5] at (0+0.5,-3/2){$\ket{0}$};
\node[scale=1.5] at (0+0.5,-6/2){$\ket{0}$};
\draw [decorate,decoration={brace,amplitude=10pt},xshift=-4pt,yshift=0pt]
(-0.0,-6/2) -- (-0.0,-1/2) node [black,midway,xshift=0.6cm]{};
\node[scale=1.5] at (-0.85,-1.75){$n$};
\draw[line width=1.25] (6,-1/2) -- (1,-1/2);
\draw[line width=1.25] (6,-3/2) -- (1,-3/2);
\node[scale=1.2] at (0.45,-2.1){$\vdots$};
\draw[line width=1.25] (6,-6/2) -- (1,-6/2);
\draw [fill=white,thick] (2-0.5,-6/2-0.5) rectangle (3-0.5,0);
\draw [fill=white,thick] (3.5-0.5,-6/2-0.5) rectangle (4.5-0.5,0);
\draw [fill=white,thick] (5-0.5,-6/2-0.5) rectangle (6-0.5,0);
\node[scale=1.5, fill=white] at (2.5-0.5,-3/2-0.25){$C$};
\node[scale=1.5] at (4-0.5,-3/2-0.25){$\mathcal{N}_{P}$};
\node[scale=1.5] at (5.5-0.5,-3/2-0.25){$C^\dagger$};
\node[meter, scale = 0.9] (meter1) at (6.5,-3/2) {};
\node[meter, scale = 0.9] (meter1) at (6.5,-6/2) {};
\end{tikzpicture}
\caption{Equivalence between bilocal Clifford protocols and a subset of error detection schemes. This circuit detects as errors the set of Pauli strings that do not end up in the pillars after applying the Clifford circuit $C$.}
\label{fig:errordetection}
\end{figure}

\section{Preservation of distillation statistics}\label{sec:preservdiststat}
The distillation statistics from equations \eqref{eq:sucprob}, \eqref{eq:fidelity} and \eqref{eq:othercoeff} are the relevant parameters for quantifying an entanglement distillation protocol. Furthermore, there exist non-identical bilocal Clifford circuits which result in the same distillation statistics. To find all bilocal Clifford protocols, it is thus sufficient to find a representative bilocal Clifford protocol for each unique collection of distillation statistics. In this section we characterise these representatives for general input states.

First, we specify the set of bilocal Clifford operations that preserve the distillation statistics. We denote this set by $\mathcal{D}_n$. Now observe that $\mathcal{D}_n$ is a subgroup of $\Sp(2n, \mathbb{F}_2)$. Moreover, let $M \in \Sp(2n, \mathbb{F}_2)$ and consider the corresponding distillation protocol. We can freely add or remove elements from $\mathcal{D}_n$ at the end of this protocol, without changing the distillation statistics. That is, all elements in the right coset $\mathcal{D}_n M = \{DM: D \in \mathcal{D}_n\}$ yield the same distillation statistics. Instead of optimising over all possible Clifford circuits it thus suffices to optimise over the right cosets of $\mathcal{D}_n$ in $\Sp(2n, \mathbb{F}_2)$.

\subsection{Characterising the subgroup that preserves distillation statistics using base and pillars}
In this section we explain the relation between the base and the pillars, which were introduced in definitions \ref{def:base} and \ref{def:pillars}, respectively, and the distillation subgroup~$\mathcal{D}_n$.

From equations \eqref{eq:sucprob},  \eqref{eq:fidelity} and \eqref{eq:othercoeff} it can be observed that for a general initial state, the operations that preserve the distillation statistics are precisely those operations that leave simultaneously both the base and pillars invariant, and permute the three affine subspaces $\mathscr{B}+e_1, \mathscr{B}+e_{n+1}$ and $\mathscr{B}+e_1 +e_{n+1}$. In the following lemma it is first proven that invariance of the base implies invariance of the pillars, and vice versa.
\begin{lemma}
    Let $\mathcal{Q}$ be an $n$-qubit bipartite quantum system with base $\mathscr{B} \subseteq \mathbb{F}_2^{2n}$ and pillars $\mathscr{P} \subseteq \mathbb{F}_2^{2n}$. Let $\pi: \mathbb{F}_2^{2n} \rightarrow \mathbb{F}_2^{2n}$, $\pi(v) = Mv$, with $M \in \Sp(2n, \mathbb{F}_2)$. Then $\pi[\mathscr{B}] = \mathscr{B} \iff \pi[\mathscr{P}] = \mathscr{P}$.
    \label{thm:basepillarsrelation}
\end{lemma}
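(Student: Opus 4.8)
The plan is to express both the base $\mathscr{B}$ and the pillars $\mathscr{P}$ in a way that makes the relation between them manifest under symplectic maps, and the natural candidate is the symplectic complement. First I would observe that $\mathscr{B}$ is a subspace of $\mathbb{F}_2^{2n}$ of dimension $n-1$ (it is cut out by setting the coordinates $v_1,\dots,v_{n+1}$ to zero), and that $\mathscr{P}$ is a subspace of dimension $n+1$ (cut out by setting $v_2,\dots,v_n$ to zero). Since a symplectic map $\pi(v)=Mv$ is linear and invertible, $\pi[\mathscr{B}]$ and $\pi[\mathscr{P}]$ are again subspaces of the same respective dimensions, so the statement is really about subspaces. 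The key claim I would try to establish is
\begin{equation}
    \mathscr{B}^{\perp} = \mathscr{P},
    \label{eq:proposal-bperp}
\end{equation}
i.e.\ the pillars are exactly the symplectic complement of the base (and hence, by \eqref{eq:sympcomplementperp}, $\mathscr{P}^{\perp}=\mathscr{B}$). This is a direct computation with the matrix $\Omega$: for $v$ in the base one has $v_i=0$ for $i\in\{1,\dots,n+1\}$, so $\Omega v$ is supported on coordinates $n+2,\dots,2n$ together with coordinate $1$; pairing with a general $w$ shows $\omega(v,w)=0$ for all such $v$ exactly when $w$ vanishes on coordinates $2,\dots,n$, which is precisely the defining condition of $\mathscr{P}$. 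A dimension count ($\dim\mathscr{B}^\perp = 2n-\dim\mathscr{B} = 2n-(n-1)=n+1=\dim\mathscr{P}$) confirms there is nothing missing. I would place this identity, with its short proof, as the first step (possibly as a displayed observation inside the proof, or a one-line lemma).

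Given \eqref{eq:proposal-bperp}, the lemma follows almost immediately from the fact that symplectic maps preserve the symplectic form, hence commute with taking symplectic complements: for any subspace $V$ and any $M\in\Sp(2n,\mathbb{F}_2)$ we have $\pi[V^{\perp}] = (\pi[V])^{\perp}$, because $\omega(Mv,Mw)=\omega(v,w)$ for all $v,w$. So if $\pi[\mathscr{B}]=\mathscr{B}$, then
\begin{equation}
    \pi[\mathscr{P}] = \pi[\mathscr{B}^{\perp}] = (\pi[\mathscr{B}])^{\perp} = \mathscr{B}^{\perp} = \mathscr{P},
\end{equation}
and conversely if $\pi[\mathscr{P}]=\mathscr{P}$ then applying $\perp$ and using $(V^\perp)^\perp = V$ from \eqref{eq:sympcomplementperp} gives $\pi[\mathscr{B}]=\pi[\mathscr{P}^{\perp}]=(\pi[\mathscr{P}])^{\perp}=\mathscr{P}^{\perp}=\mathscr{B}$. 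This symmetric argument handles both implications at once.

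The one point that needs a little care — and which I expect to be the main (minor) obstacle — is verifying the complement identity \eqref{eq:proposal-bperp} cleanly, since the index sets in Definitions~\ref{def:base} and~\ref{def:pillars} are slightly asymmetric (the base kills $v_1,\dots,v_{n+1}$, the pillars kill $v_2,\dots,v_n$) and one has to be careful about how $\Omega = \begin{bmatrix} 0 & I_n \\ I_n & 0\end{bmatrix}$ swaps the first block of $n$ coordinates with the second. Concretely, for $v\in\mathscr{B}$ the nonzero entries of $v$ live in positions $n+2,\dots,2n$, so $\Omega v$ has nonzero entries only in positions $2,\dots,n$; therefore $\omega(v,w)=(\Omega v)^\transp w$ depends on $w$ only through $w_2,\dots,w_n$, and ranging over all $v\in\mathscr{B}$ the vectors $\Omega v$ span the full coordinate subspace indexed by $\{2,\dots,n\}$, so $\omega(v,w)=0$ for all $v\in\mathscr{B}$ iff $w_2=\dots=w_n=0$, i.e.\ $w\in\mathscr{P}$. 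Writing this out with explicit basis vectors $e_i$ makes it routine. Everything else in the proof is formal manipulation with properties already recorded in the preliminaries, in particular \eqref{eq:sympcomplementperp} and the defining property $M^\transp\Omega M=\Omega$ of the symplectic group.
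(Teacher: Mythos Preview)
Your proof is correct and follows essentially the same approach as the paper: both establish the key identity $\mathscr{B}^{\perp}=\mathscr{P}$ by direct computation with $\Omega$, then use that symplectic maps preserve $\omega$ to conclude. The only cosmetic difference is that you invoke the clean general fact $\pi[V^{\perp}]=(\pi[V])^{\perp}$ and chain equalities, whereas the paper argues elementwise (for $v\in\mathscr{B}$, $w\in\mathscr{P}$ one has $\omega(\pi(v),\pi(w))=0$, hence $\pi(w)\in\mathscr{B}^{\perp}=\mathscr{P}$, and bijectivity finishes); these are the same argument in slightly different packaging.
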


\begin{proof}
    We first prove $\pi[\mathscr{B}] = \mathscr{B} \implies \pi[\mathscr{P}] = \mathscr{P}$. For this, we first show that the pillars form the symplectic complement of the base, i.e.~$\mathscr{B}^\perp = \mathscr{P}$ (see equation~\eqref{eq:sympcomplement}). Recall from Definition \ref{def:base} that $v \in \mathscr{B}$ if and only if $v_i = 0$ for ${i = 1,...,n+1}$. Note that $\mathscr{B}$ is a subspace of $\F_2^{2n}$. The symplectic inner product between $v$ and $w \in \mathbb{F}_2^{2n}$, is equal to $\omega(v,w) = v^T \Omega w$. This is equal to zero for all $v \in \mathscr{B}$ if and only if $w_i = 0$ for all $i \in \{2,...,n\}$, so iff $w \in \mathscr{P}$.
    
     Let $v \in \mathscr{B}$ and $w \in \mathscr{P}$. Then $\omega(v, w) = 0$, and since $M \in \Sp(2n, \F_2^{2n})$, we have that $\omega\left(\pi(v), \pi(w)\right) = 0$ as well. Since by assumption $\pi(v) \in \mathscr{B}$, it follows that $\pi(w) \in \mathscr{P}$. Finally, since $\pi$ is an automorphism, we know that it is bijective and thus $\pi[\mathscr{P}] = \mathscr{P}$.
        
    For the other direction, we use the fact that $\mathscr{P}^\perp = \mathscr{B}$, see equation~\eqref{eq:sympcomplementperp}. Then, the above argument can be repeated with $\mathscr{B}$ and $\mathscr{P}$ interchanged to conclude that $\pi[\mathscr{B}] = \mathscr{B} \iff \pi[\mathscr{P}] =\mathscr{P}$.
\end{proof}

We will now show that not only does preservation of the base imply preservation of the fidelity and success probability, but that it also implies that the other coefficients of the output state are preserved, up to a permutation.
\begin{lemma}
Let $\pi: \mathbb{F}_2^{2n} \rightarrow \mathbb{F}_2^{2n}$, $\pi(v) = Mv$, with $M \in \Sp(2n, \mathbb{F}_2)$ such that $\pi[\mathscr{B}] = \mathscr{B}$. Then the coefficients $F_1, F_2, F_3$ are permuted amongst each other after applying $\pi$. In other words, the three coefficients of the output state are stabilised as a set after application of $M$.
    \label{thm:affinesubspaces}
\end{lemma}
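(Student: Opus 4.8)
The plan is to exploit the fact, already used in Lemma \ref{thm:basepillarsrelation}, that $\mathscr{P} = \mathscr{B}^\perp$, and to understand the three affine subspaces $\mathscr{B}+e_1$, $\mathscr{B}+e_1+e_{n+1}$, $\mathscr{B}+e_{n+1}$ as exactly the three nonzero cosets of $\mathscr{B}$ inside $\mathscr{P}$. Indeed, $\mathscr{P}$ is a subspace of dimension $n+1$ containing $\mathscr{B}$ as a subspace of dimension $n-1$ (the only coordinates allowed to be nonzero in $\mathscr{P}$ but forced to zero in $\mathscr{B}$ are coordinates $1$ and $n+1$), so $\mathscr{P}/\mathscr{B}$ is a two-dimensional $\F_2$-vector space with exactly four cosets: $\mathscr{B}$ itself and the three cosets $\mathscr{B}+e_1$, $\mathscr{B}+e_{n+1}$, $\mathscr{B}+e_1+e_{n+1}$. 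These latter three are precisely the affine subspaces appearing in equation \eqref{eq:othercoeff}.

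First I would record that $\pi[\mathscr{B}]=\mathscr{B}$ implies $\pi[\mathscr{P}]=\mathscr{P}$ by Lemma \ref{thm:basepillarsrelation}. Since $\pi$ is $\F_2$-linear and bijective, it therefore induces a well-defined linear automorphism $\bar\pi$ of the quotient space $\mathscr{P}/\mathscr{B}$. Being a bijection of a finite set, $\bar\pi$ permutes the four cosets of $\mathscr{B}$ in $\mathscr{P}$; moreover $\bar\pi$ fixes the zero coset (since $\pi[\mathscr{B}]=\mathscr{B}$), hence it permutes the three nonzero cosets $\mathscr{B}+e_1$, $\mathscr{B}+e_{n+1}$, $\mathscr{B}+e_1+e_{n+1}$ among themselves. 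Concretely this says: for each $i\in\{1,2,3\}$ there is some $j\in\{1,2,3\}$ with $\pi[\mathscr{B}+v_i]=\mathscr{B}+v_j$, where $v_1=e_1$, $v_2=e_1+e_{n+1}$, $v_3=e_{n+1}$ as in \eqref{eq:othercoeff}. Translating back through the probability bookkeeping: applying $M$ sends the numerator $\sum_{v\in\mathscr{B}+v_i}\tilde p_v$ of $F_i$ to $\sum_{v\in\mathscr{B}+v_j}\tilde p_v$ (because $M$ merely permutes the binary labels of the probabilities), while by Lemma \ref{thm:basepillarsrelation} the common denominator $\sum_{v\in\mathscr{P}}\tilde p_v = p_\textrm{suc}$ is unchanged. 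Hence the multiset $\{F_1,F_2,F_3\}$ is preserved, i.e. the three output coefficients are stabilised as a set, which is the claim.

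The only point requiring a little care — and the closest thing to an obstacle — is verifying that $\mathscr{B}$ genuinely has codimension $2$ in $\mathscr{P}$ and that the three affine subspaces in question really are the three nonzero cosets, rather than, say, cosets of a different subgroup. This is a direct check from Definitions \ref{def:base} and \ref{def:pillars}: a vector in $\mathscr{P}$ has $v_i=0$ for $i\in\{2,\dots,n\}$, so it is free in the $n+1$ coordinates $\{1,n+1,n+2,\dots,2n\}$; a vector in $\mathscr{B}$ is in addition forced to have $v_1=v_{n+1}=0$, freeing only coordinates $\{n+2,\dots,2n\}$; the quotient is thus coordinatised by $(v_1,v_{n+1})\in\F_2^2$, and the cosets corresponding to $(1,0)$, $(1,1)$, $(0,1)$ are exactly $\mathscr{B}+e_1$, $\mathscr{B}+e_1+e_{n+1}$, $\mathscr{B}+e_{n+1}$. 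Everything else is formal.
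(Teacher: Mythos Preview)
Your proposal is correct and follows essentially the same route as the paper: decompose $\mathscr{P}$ into the four cosets of $\mathscr{B}$, invoke Lemma~\ref{thm:basepillarsrelation} to get $\pi[\mathscr{P}]=\mathscr{P}$, and conclude that $\pi$ permutes the three nonzero cosets, hence the $F_i$. Your quotient-space phrasing and the explicit codimension check are a bit more detailed than the paper's version, but the underlying argument is identical.
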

\begin{proof}
We have the following decomposition of the linear space $\mathscr{P}$ into four cosets of the subspace $\mathscr{B}$:
$$
\mathscr{P}=\mathscr{B}\cup \left(\mathscr{B}+e_1\right)\cup\left( \mathscr{B}+e_1+e_{n+1}\right)\cup \left(\mathscr{B} +e_{n+1}\right).$$

Since $\pi\left[\mathscr{B}\right]=\mathscr{B}$, it follows from Lemma~\ref{thm:basepillarsrelation} that $\pi\left[\mathscr{P}\right]=\mathscr{P}$. So $\pi$ permutes the three cosets $v_i+\mathscr{B}$, $v_i=e_1,e_1+e_{n+1},e_{n+1}$.

It follows by equation~\eqref{eq:othercoeff} that $\pi$ permutes the coefficients $F_1$, $F_2$, and $F_3$.
\end{proof}

From Lemma~\ref{thm:basepillarsrelation} and \ref{thm:affinesubspaces} we conclude that the operations that preserve the distillation statistics for arbitrary input states are precisely the operations that leave the base invariant. We use this observation to characterise the subgroup $\mathcal{D}_n$ that preserves the distillation statistics. In the trivial case that $n = 1$, we have $\mathcal{D}_1 = \Sp(2,\mathbb{F}_2)$. In this case, the only base element is the identity $I$, which is always mapped to itself under an automorphism. For all $n > 1$, however, $\mathcal{D}_n$ is a proper subgroup of $\Sp(2n,\mathbb{F}_2)$. Consider for instance the Hadamard gate on the second qubit, which is an element of $\Sp(2n,\mathbb{F}_2)$. This gate induces the swap of $X_2$ and $Z_2$ and hereby changes the base. 

\subsection{Generators of the subgroup that preserves distillation statistics}
The goal of this section is to characterise the distillation subgroup $\mathcal{D}_n$. In particular, we find the distillation subgroup in terms of a generating set $T_n$.

\begin{lemma}\label{thm:distsubgroupgen}
The distillation subgroup is generated by the set $T_n$, i.e.~$\mathcal{D}_n = \langle T_n \rangle$, where

\begin{gather}
T_n = \left\lbrace \Hgate_1, \Sgate_1, \ldots, \Sgate_n\right\rbrace\, \cup\,\left\lbrace \CNOT_{ij} \mid 1\leq j < i \leq n \right \rbrace \nonumber \\
\,\cup \, \left\lbrace \CNOT_{ij} \mid 2 \leq i < j \leq n\right\rbrace.
\end{gather}
\end{lemma}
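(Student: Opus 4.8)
The plan is to show two inclusions. The easy one is $\langle T_n\rangle\subseteq\mathcal{D}_n$: since (by Lemma~\ref{thm:basepillarsrelation} and Lemma~\ref{thm:affinesubspaces}) an element of $\Sp(2n,\F_2)$ lies in $\mathcal{D}_n$ iff it fixes the base $\mathscr{B}$ setwise, it suffices to check that each generator in $T_n$ maps $\mathscr{B}$ into itself. Recall $\mathscr{B}$ consists of vectors supported on coordinates $\{n+2,\dots,2n\}$, i.e.\ Pauli strings $I_1\otimes Q_2\otimes\cdots\otimes Q_n$ with $Q_j\in\{I,Z\}$. Then $\Sgate_i$ acts trivially on all $Z$'s and on $I_1$; $\Hgate_1$ swaps $X_1\leftrightarrow Z_1$ but the first slot is $I$ on the base, so it is fixed; $\CNOT_{ij}$ with $j<i$ (control below target in index) propagates $Z$'s from target to control and $X$'s from control to target — on a base vector the controls carry only $I/Z$ so no $X$ is created, and $Z$'s stay within slots $\ge 2$ provided $i>j\ge 2$; the condition $1\le j<i\le n$ together with the condition that the first slot never receives a $Z$ forces exactly the listed ranges, and similarly $\CNOT_{ij}$ with $2\le i<j\le n$ keeps everything inside slots $\{2,\dots,n\}$. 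I would do this bookkeeping once in the binary picture using the explicit symplectic matrices of $\Hgate$, $\Sgate$, $\CNOT$ composed with the block form, which makes "fixes $\mathscr{B}$" a one-line check per generator type.

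The substantive direction is $\mathcal{D}_n\subseteq\langle T_n\rangle$. Here I would argue that $\langle T_n\rangle$ already acts transitively enough on the relevant structure that any base-preserving symplectic map can be reduced to the identity by multiplying by elements of $\langle T_n\rangle$. Concretely: given $M\in\mathcal{D}_n$, it permutes the pillars $\mathscr{P}$ (Lemma~\ref{thm:basepillarsrelation}) and the base $\mathscr{B}$, and stabilises the three cosets $\mathscr{B}+e_1,\mathscr{B}+e_1+e_{n+1},\mathscr{B}+e_{n+1}$ as a set (Lemma~\ref{thm:affinesubspaces}). The strategy is a normal-form / Gaussian-elimination argument: use $\CNOT$'s among qubits $2,\dots,n$ (both orientations are available since both $\{\CNOT_{ij}:2\le i<j\le n\}$ and $\{\CNOT_{ij}:1\le j<i\le n\}$ restricted to indices $\ge 2$ are in $T_n$) to perform arbitrary invertible $\F_2$-row/column operations on the "$Z$-block" of coordinates $\{n+2,\dots,2n\}$, thereby normalising the action of $M$ on $\mathscr{B}$; then use $\Hgate_1,\Sgate_1$ and the $\CNOT$'s with an endpoint at qubit $1$ (only $\CNOT_{i1}$ for $i>1$ is allowed, i.e.\ qubit $1$ may be a target but not a control into qubits $\ge 2$ — the asymmetry in $T_n$ is exactly what is needed to not disturb $\mathscr{B}$) to kill the remaining "first-qubit" and mixed blocks. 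After these reductions $M$ fixes $\mathscr{B}$ pointwise and acts as the identity on a spanning set of $\F_2^{2n}$, hence $M\in\langle T_n\rangle$. It may be cleanest to phrase this as: the stabiliser in $\Sp(2n,\F_2)$ of the base is a parabolic-type subgroup, compute the set of generators of a parabolic directly, and match it to $T_n$; one can then also read off $|\mathcal{D}_n|$ for the companion "order" statement.

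The main obstacle I expect is getting the $\CNOT$ orientation conditions exactly right and proving completeness of the generating set rather than just membership. It is easy to see each generator lies in $\mathcal{D}_n$; showing $T_n$ generates all of $\mathcal{D}_n$ requires an honest reduction argument, and the delicate point is that a $\CNOT$ with control in $\{2,\dots,n\}$ and target qubit $1$ would push a $Z$ into the first slot and leave $\mathscr{B}$ — so that direction must be excluded, while $\CNOT_{i1}$ (target $= 1$) is fine because the first slot of a base vector is $I$, so nothing is propagated out. Keeping track of which elementary column operations on the symplectic matrix are realised by the \emph{allowed} $\CNOT$'s, and checking that these still suffice to perform the full Gaussian elimination on the base-stabilising subgroup, is where the real care is needed; I would organise it by first handling the $(n-1)\times(n-1)$ block acting on $\mathscr{B}$, then the first-qubit $2\times 2$ symplectic part, then the off-diagonal couplings, verifying at each stage that the correction used lies in $T_n$ and preserves $\mathscr{B}$.
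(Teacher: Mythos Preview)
Your plan matches the paper's proof: verify each generator preserves $\mathscr{B}$, then reduce an arbitrary $M\in\mathcal{D}_n$ to the identity by left-multiplying with elements of $\langle T_n\rangle$ via Gaussian elimination. The paper organises the reduction through the block form $M=\left[\begin{smallmatrix}A&B\\C&D\end{smallmatrix}\right]$: base- and pillar-preservation force $B_{ij}=0$ for $(i,j)\neq(1,1)$ and $D_{1k}=0$ for $k\ge 2$; then $\Hgate_1,\Sgate_1$ clear $B_{11}$, the $\CNOT$s in $T_n$ reduce $D$ to $I$ (whence $A=I$ and $C=C^\transp$ by the symplectic relations), and finally $\Sgate_i$ together with the derived gates $\Sgate_{ij}:=(\Sgate_j\CNOT_{ij})^2\in\langle T_n\rangle$ zero out $C$. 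This last step is the one place your sketch is thin: ``kill the off-diagonal couplings'' needs exactly these $\Sgate_{ij}$, which are not in $T_n$ itself but lie in $\langle T_n\rangle$.

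One bookkeeping error to fix: your ``main obstacle'' paragraph is self-contradictory. You write that a $\CNOT$ with control in $\{2,\dots,n\}$ and target qubit $1$ must be excluded, and then immediately that ``$\CNOT_{i1}$ (target $=1$) is fine'' --- these describe the same gate. The gate actually excluded from $T_n$ is $\CNOT_{1j}$ for $j\ge 2$ (control on qubit $1$): a $Z$ on the target $j$ propagates to the control and lands a $Z$ on slot $1$, leaving $\mathscr{B}$. Your earlier parenthetical, ``qubit $1$ may be a target but not a control into qubits $\ge 2$,'' is the correct formulation; just make the later paragraph consistent with it.
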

\begin{proof}
By inspection, each element of $T_n$ preserves the base, so by Lemmas~\ref{thm:basepillarsrelation} and \ref{thm:affinesubspaces} we have that $\langle T_n \rangle \subseteq \mathcal{D}_n$. For the other inclusion, let $M = \begin{bmatrix}A & B\\ C & D\end{bmatrix} \in \mathcal{D}_n$. We show that we can reduce such an arbitrary $M$ to the identity matrix by left-multiplication by elements in $\langle T_n\rangle $. An overview of the basic matrix operations corresponding to multiplication by elements of $\Sp(2n,\mathbb{F}_2^{2n})$ is given in Appendix \ref{sec:matrixoperations}.

First, note that if $M\in \mathcal{D}_n$, then by definition $M[\mathscr{B}] \subseteq \mathscr{B}$ and $M[\mathscr{P}] \subseteq \mathscr{P}$. In the binary picture this implies that $$B_{ij} = 0\textrm{ if }(i, j) \neq (1, 1), \qquad D_{12} = \cdots = D_{1n} = 0.$$

Since $M$ has full rank, we cannot have $B_{11} = D_{11} = 0$. Hence, by multiplying $M$ from the left by $I, \Hgate_1$ or $\Hgate_1\Sgate_1$, we may assume that $B_{11}=0$ (such that $B=0$) and $D_{11} = 1$.

That $M$ has full rank implies that the last $n$ columns of $M$ are linearly independent. By using $\CNOT$ gates from $\left\lbrace \CNOT_{ij} \mid 1\leq j < i \leq n \right \rbrace \subseteq T_n$ and $\left\lbrace \CNOT_{ij} \mid 2 \leq i < j \leq n\right\rbrace \subseteq T_n$, the $D$ submatrix can be reduced to the identity matrix.

Since $D = I$ and $B = 0$, it follows from~\eqref{eq:symplecticblocks} that $A = I$ and $C$ is symmetric. For $1\leq j < i$ denote $\Sgate_{ij}:=(\Sgate_j\CNOT_{ij})^2\in \langle T_n\rangle$. Left-multiplication by $S_{ij}$ corresponds to adding row $i$ to row $n+j$ and adding rows $i$ and $j$ to row $n+i$. Note that this preserves the fact that $A = D = I$ and $B=0$.

For $j = 1, \ldots, n-1$ (in this order), we can multiply $M$ from the left by elements from $\lbrace S_j\rbrace \cup \lbrace \Sgate_{ij} \mid i > j\rbrace \subseteq \langle T_n\rangle$ to ensure that $C_{ji} = 0$ for all $1\leq j \leq i \leq n$. This implies that $C$ is strictly lower triangular. But if $C$ is strictly lower triangular and symmetric, $C = 0$. This implies that $M = I$.
\end{proof}

\subsection{Order of the subgroup that preserves distillation statistics}
As noted before, for general input states it is sufficient to only consider the right cosets of $\mathcal{D}_n$ in $\Sp(2n,\mathbb{F}_2)$. To see how much looking at cosets of $\mathcal{D}_n$ in $\Sp(2n,\mathbb{F}_2)$ limits the search space of protocols, in this section a formula for the order of $\mathcal{D}_n$ is presented and proved. As mentioned earlier, in the trivial case that $n=1$ we have $\mathcal{D}_1 = \Sp(2,\mathbb{F}_2)$, and thus $|\mathcal{D}_1| = |\Sp(2,\mathbb{F}_2)| = 6$. For $n \geq 2$ the order of $\mathcal{D}_n$ is given in Theorem~\ref{thm:orderDn}.

\begin{theorem}
    For an $n$-to-$1$ distillation protocol, with $n > 1$, the order of $\mathcal{D}_n$ is given by
    $$|\mathcal{D}_n| = 6 \cdot 2^{n^2-1} \prod_{j=1}^{n-1} (2^j - 1).$$
    \label{thm:orderDn}
\end{theorem}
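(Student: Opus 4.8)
The plan is to count $\mathcal{D}_n$ by parametrising its elements explicitly, reusing the block analysis from the proof of Lemma~\ref{thm:distsubgroupgen}. By Lemmas~\ref{thm:basepillarsrelation} and~\ref{thm:affinesubspaces}, $\mathcal{D}_n$ is exactly the setwise stabiliser in $\Sp(2n,\F_2)$ of the base $\mathscr{B}$, and hence also of the pillars $\mathscr{P}=\mathscr{B}^\perp$. So for $M=\begin{bmatrix}A&B\\C&D\end{bmatrix}\in\mathcal{D}_n$ I would first record the constraints that $M[\mathscr{B}]=\mathscr{B}$ and $M[\mathscr{P}]=\mathscr{P}$ impose on the blocks: $B$ has every entry zero except possibly $B_{11}$, the first row of $D$ is $D_{11}e_1^\transp$ (i.e.\ $D_{12}=\cdots=D_{1n}=0$), and the first column of $A$ is $A_{11}e_1$. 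Conversely, any $M$ satisfying these block conditions together with the symplectic relations~\eqref{eq:symplecticblocks} lies in $\mathcal{D}_n$, so it suffices to count such $M$.

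I would then split on the single bit $B_{11}\in\{0,1\}$ and, writing $\hat{A},\hat{C},\hat{D}$ for the lower-right $(n-1)\times(n-1)$ blocks, feed the above into~\eqref{eq:symplecticblocks}. In both cases the last $n-1$ columns of $A^\transp D+C^\transp B=I_n$ force $\hat{A}^\transp\hat{D}=I_{n-1}$, so $\hat{A}$ is a free element of $\mathrm{GL}(n-1,\F_2)$ and $\hat{D}$ is determined; the remaining part of that relation, together with $B^\transp D+D^\transp B=0$, pins down the rest of $D$ in terms of $\hat{A}$ and one free vector (and of $A_{11},D_{11}$, which are both forced to $1$ when $B_{11}=0$ and free when $B_{11}=1$), while $A^\transp C+C^\transp A=0$ says $A^\transp C$ is symmetric and hence, $A$ being invertible, identifies $C$ with a free symmetric matrix --- of size $n\times n$ when $B_{11}=0$, but of size only $(n-1)\times(n-1)$ when $B_{11}=1$, since there the whole first row of $C$ has already been fixed by the previous relations. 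Multiplying the independent factors --- $|\mathrm{GL}(n-1,\F_2)|=2^{\binom{n-1}{2}}\prod_{j=1}^{n-1}(2^j-1)$ against the powers of two coming from the free vector(s), the free bits $A_{11},D_{11}$, and the symmetric matrix --- I expect $B_{11}=0$ to contribute $2^{n^2}\prod_{j=1}^{n-1}(2^j-1)$ and $B_{11}=1$ to contribute $2^{n^2+1}\prod_{j=1}^{n-1}(2^j-1)$, so that $|\mathcal{D}_n|=3\cdot 2^{n^2}\prod_{j=1}^{n-1}(2^j-1)=6\cdot 2^{n^2-1}\prod_{j=1}^{n-1}(2^j-1)$.

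A cleaner conceptual decomposition I would use as a cross-check, and which explains the factor $6$: since $\mathscr{B}$ is totally isotropic with $\mathscr{B}\subseteq\mathscr{B}^\perp=\mathscr{P}$, the quotient $\mathscr{P}/\mathscr{B}$ is a $2$-dimensional symplectic space, and restriction gives a homomorphism $\mathcal{D}_n\to\Sp(2,\F_2)\cong S_3$ permuting the three nontrivial cosets $\mathscr{B}+e_1,\ \mathscr{B}+e_{n+1},\ \mathscr{B}+e_1+e_{n+1}$. Because $\Hgate_1,\Sgate_1\in\mathcal{D}_n$ induce the two distinct transpositions ($X_1\!\leftrightarrow\!Z_1$) and ($X_1\!\leftrightarrow\!Y_1$), this map is onto, so $|\mathcal{D}_n|=6\,|\mathcal{K}|$ for $\mathcal{K}$ its kernel; and $\mathcal{K}$ is exactly the $B_{11}=0$ family above subject to the extra constraint $C_{11}=0$, which corresponds to one free bit of the symmetric matrix and so halves the count to $|\mathcal{K}|=2^{n^2-1}\prod_{j=1}^{n-1}(2^j-1)$.

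The main obstacle is bookkeeping rather than ideas: one must check that the parametrisation is a genuine bijection onto $\mathcal{D}_n$ --- that the ``free'' parameters are jointly unconstrained, that every other block is uniquely determined from them, and that no symplectic relation is used twice or left unused. The delicate case is $B_{11}=1$, where the first row of $C$, the vectors appearing in $A$ and $D$, and the off-diagonal part of $A^\transp C+C^\transp A=0$ all interact, and one has to confirm that the symmetric degree of freedom really drops by exactly a factor $2^n$ relative to $B_{11}=0$ while the free bits $A_{11},D_{11}$ and the decoupling of the two vectors restore a factor $2^{n+1}$. The cleanest way to settle this is to verify directly that $M^\transp\Omega M=\Omega$ holds for every choice of the listed parameters; this simultaneously shows the parametrised family is contained in (all of) $\mathcal{D}_n$ and fixes the exponents.
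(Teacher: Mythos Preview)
Your approach is sound and takes a genuinely different route from the paper. The paper argues by an orbit--stabiliser recursion: it counts the possible images under $D\in\mathcal{D}_n$ of the pair $(e_{2n},e_n)$---obtaining $2^{n-1}-1$ choices for $De_{2n}\in\mathscr{B}\setminus\{0\}$ and then $2^{2n-1}$ choices for $De_n$, with explicit gate sequences witnessing that every such image is achieved---shows that the joint stabiliser of $e_n$ and $e_{2n}$ inside $\mathcal{D}_n$ is isomorphic to $\mathcal{D}_{n-1}$, and unwinds the resulting recursion $|\mathcal{D}_n|=2^{2n-1}(2^{n-1}-1)\,|\mathcal{D}_{n-1}|$ from the base case $|\mathcal{D}_1|=6$. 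You instead parametrise $\mathcal{D}_n$ directly via the block entries and count free parameters. Your two branch counts are correct ($2^{n^2}$ and $2^{n^2+1}$ times $\prod_{j=1}^{n-1}(2^j-1)$), and your second, conceptual argument via the surjection $\mathcal{D}_n\twoheadrightarrow\Sp(\mathscr{P}/\mathscr{B})\cong\Sp(2,\F_2)\cong S_3$ is a genuine bonus: it explains structurally why the factor $6$ appears, whereas in the paper's argument it surfaces only as the base case $|\mathcal{D}_1|$. What your approach buys is an explicit normal form for elements of $\mathcal{D}_n$; what the paper's buys is that each step is witnessed by concrete Clifford gates, which meshes with the generator description in Lemma~\ref{thm:distsubgroupgen}.

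One small wrinkle to tidy when you write it up: the clause ``$A$ being invertible'' fails in the subcase $B_{11}=1$, $A_{11}=0$, since the first column of $A$ is then zero. In that subcase the identification of the residual freedom in $C$ with a symmetric matrix must be run through the invertible $(n-1)\times(n-1)$ block $\hat{A}$ rather than through $A$ itself (concretely: $c'$ is determined by $\hat{A}^\transp c'=A_{11}c+C_{11}a$, and $\hat{C}$ ranges over a coset, of size $2^{\binom{n}{2}}$, of $(\hat{A}^\transp)^{-1}\{\text{symmetric }(n-1)\times(n-1)\}$). This is exactly the delicate bookkeeping you already flag in your last paragraph, and once handled the parametrisation is a bijection and the count stands.
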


\begin{proof}
    First note that $D \in \mathcal{D}_n$ is fully determined by how it maps each of the standard basis vectors $\{e_i: i \in \{1,...,2n\}\}$ of $\mathbb{F}_2^{2n}$. We count how many transformations of the standard basis vectors are possible.
    
    Let us start by looking at $e_{2n}$. This is a base element, thus it must again be transformed to a base element, because $D$ preserves the distillation statistics. There are $2^{n-1}$ base elements, but the identity element, the zero vector, is always mapped to itself by $D$. Thus there are $2^{n-1} - 1$ possibilities for the transformation of $e_{2n}$. That all transformations are indeed possible, is proved by giving a construction. Suppose that $e_{2n}$ is mapped to a base element $v \equiv De_{2n} \in \mathscr{B}$. We show that $v$ can be transformed to $e_{2n}$ through left-multiplication by elements of $\mathcal{D}_n$, see Appendix \ref{sec:matrixoperations}. The transformation from $e_{2n}$ to $v$ can then be obtained by taking the product of the inverses of these generators in reverse order.
    
    Note that $v_1,...,v_{n+1} = 0$. The vector $v$ can be transformed to $e_{2n}$ by taking the following steps. 
    \begin{enumerate}
        \item If $v_{2n} = 0$, apply a $\CNOT_{ni} \in \mathcal{D}_n$ gate with $i$ chosen such that $v_{n+i} = 1$. Note that there always is a $i$ such that this is possible, because otherwise $v$ is the zero vector, which corresponds to the identity element $I^{\otimes n}$. 
        \item For all $i \in \{2,...,n\}$ with $v_{n+i} = 1$, apply a $\CNOT_{in}\in \mathcal{D}_n$ gate.
    \end{enumerate}
    Steps 1 and 2 are visually summarised below.
    \begin{equation*}
        v =
        \left[ \begin{array}{c}
        0 \\ 0 \\ 0 \\ \hline
        0 \\ \cdot \\ \cdot
        \end{array} \right]
        \xrightarrow{1} 
        \left[ \begin{array}{c}
        0 \\ 0 \\ 0 \\ \hline
        0 \\ \cdot \\ 1
        \end{array} \right]
        \xrightarrow{2} 
        \left[ \begin{array}{c}
        0 \\ 0 \\ 0 \\ \hline
        0 \\ 0 \\ 1
        \end{array} \right] = e_{2n}
    \end{equation*}
    
    Given the transformation of $e_{2n}$ by $D$, we now wish to determine the number of possible transformations for $e_n$. We know that left-multiplication by $D$ preserves the symplectic inner product. Hence, since $\omega(e_n, e_{2n}) = 1$, it must hold that $\omega(De_n, De_{2n}) = 1$. Observe that for every non-zero element $u \in \mathbb{F}_2^{2n}$, exactly for half of the elements of $\mathbb{F}_2^{2n}$ the symplectic inner product with $u$ is equal to one\footnote{Let $u \in \mathbb{F}_2^{2n}$, such that $u_k = 1$. Then the vectors $u' \in \mathbb{F}_2^{2n}$ satisfying $\omega(u, u') = 1$ can be constructed by choosing $u'_j \in \{0,1\}$ randomly for $j \in \{1,\dots,2n\}\setminus \{n+k\}$ and then choosing $u'_{n+k} \in \{0,1\}$ such that $\omega(u, u')= 1$.}. Thus there are $\frac{|\mathbb{F}_2^{2n}|}{2} = \frac{4^n}{2} = 2^{2n-1}$ possibilities for the transformation of $e_n$.
    
    We show that each of those transformations can indeed be achieved. Suppose that $D$ has mapped $e_n$ to a vector $w \equiv De_n \in \mathbb{F}_2^{2n}$. Because $\omega(De_n, De_{2n}) = 1$ and $De_{2n}$ is a base vector, we know that there is at least one ${i \in \{2,\dots,n\}}$ such that $w_i = 1$. Since we can always apply a $\CNOT_{in}\in \mathcal{D}_n$ gate, which does not affect the vector $e_{2n}$, we can assume without loss of generality that $w_n = 1$. Now $w$ can be transformed to $e_n$ without affecting $e_{2n}$ by taking the following steps.
    \begin{enumerate}[resume]
        \item For all $i$ with $w_i = 1$ apply a $\CNOT_{ni}\in \mathcal{D}_n$ gate. 
        \item For all $i\neq n$ with $w_{n+i} = 1$, apply a $\CZ_{in}\in \mathcal{D}_n$ gate. This operation results in the addition of row $i$ to row $2n$ and the addition of row $n$ to row $n+i$. Note that this operation leaves the base invariant, so indeed $\CZ_{in} \in \mathcal{D}_n$.
        \item If $w_{2n} = 1$, apply the gate $S_n \in \mathcal{D}_n$ on qubit $n$.
    \end{enumerate}
    Steps 3 to 5 are visually summarised below.
    \begin{equation*}
        w =
        \left[ \begin{array}{c}
        \cdot \\ \cdot \\ 1 \\ \hline
        \cdot \\ \cdot \\ \cdot
        \end{array} \right]
        \xrightarrow{3} 
        \left[ \begin{array}{c}
        0 \\ 0 \\ 1 \\ \hline
        \cdot \\ \cdot \\ \cdot
        \end{array} \right]
        \xrightarrow{4} 
        \left[ \begin{array}{c}
        0 \\ 0 \\ 1 \\ \hline
        0 \\ 0 \\ \cdot
        \end{array} \right]
        \xrightarrow{5} 
        \left[ \begin{array}{c}
        0 \\ 0 \\ 1 \\ \hline
        0 \\ 0 \\ 0
        \end{array} \right]= e_n
    \end{equation*}
    Thus indeed, given the transformation of $e_{2n}$, there are $2^{2n-1}$ possible transformations of $e_n$. Combining this with the number of transformations of $e_{2n}$, we find that there are $2^{2n-1}(2^{n-1}-1)$ possible transformations for $e_n$ and $e_{2n}$ together.
    
    The elements of $\mathcal{D}_n$ that leave $e_n$ and $e_{2n}$ invariant form a subgroup that is isomorphic to $\mathcal{D}_{n-1}$, with the number of cosets in $\mathcal{D}_n$ equal to $2^{2n-1}(2^{n-1} - 1)$. Thus
    $$|\mathcal{D}_n| = 2^{2n-1}(2^{n-1}-1)|\mathcal{D}_{n-1}|.$$
    By induction on $n$ it follows that \begin{equation*}
    \begin{split}
        |\mathcal{D}_n| &= |\mathcal{D}_1| \prod_{j=2}^n 2^{2j-1}(2^{j-1} - 1) \\
        &= 6 \cdot 2^{\sum_{j=2}^n (2j-1)} \prod_{j=2}^{n} (2^{j-1} -1)\\
        &= 6 \cdot 2^{n^2 - 1} \prod_{j=1}^{n-1}(2^j - 1).
    \end{split}
    \end{equation*}
\end{proof}
The following corollary is a direct consequence of Theorem \ref{thm:orderDn}.
\begin{corollary}
    The index of $\mathcal{D}_n$ in $\Sp(2n,\mathbb{F}_2)$ is given by
    $$ \left[\Sp(2n,\mathbb{F}_2) : \mathcal{D}_n\right] = \frac{1}{3} (2^n - 1) \prod_{j=1}^n (2^j + 1).$$
\end{corollary}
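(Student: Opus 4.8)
The plan is to apply Lagrange's theorem and then simplify the resulting product. By Lagrange's theorem the index equals the quotient of orders, $[\Sp(2n,\F_2):\mathcal{D}_n] = |\Sp(2n,\F_2)| / |\mathcal{D}_n|$, so I would substitute the two closed forms already available: equation~\eqref{eq:ordersymplectic} for $|\Sp(2n,\F_2)| = 2^{n^2}\prod_{j=1}^n(4^j-1)$, and Theorem~\ref{thm:orderDn} for $|\mathcal{D}_n| = 6\cdot 2^{n^2-1}\prod_{j=1}^{n-1}(2^j-1)$ in the case $n>1$.

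The main simplification is the factorisation $4^j - 1 = (2^j-1)(2^j+1)$, which splits the numerator's product into $\prod_{j=1}^n(2^j-1)\cdot\prod_{j=1}^n(2^j+1)$. The powers of $2$ cancel as $2^{n^2}/2^{n^2-1} = 2$, which combines with the $6$ in the denominator to leave a factor $1/3$. The product $\prod_{j=1}^{n-1}(2^j-1)$ in the denominator cancels all but the $j=n$ term of $\prod_{j=1}^n(2^j-1)$, leaving the single factor $2^n-1$. Collecting these gives $\tfrac13(2^n-1)\prod_{j=1}^n(2^j+1)$, as claimed.

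I would also remark on the degenerate case $n=1$ separately, since Theorem~\ref{thm:orderDn} is stated for $n>1$: there $\mathcal{D}_1 = \Sp(2,\F_2)$ so the index is $1$, and the formula gives $\tfrac13(2^1-1)(2^1+1) = \tfrac13\cdot 1\cdot 3 = 1$, so the statement holds uniformly. There is no real obstacle here — the only thing to be careful about is bookkeeping of the index ranges in the three products so that the telescoping cancellation is written correctly; everything else is a one-line algebraic manipulation once the factorisation $4^j-1=(2^j-1)(2^j+1)$ is introduced.
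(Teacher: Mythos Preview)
Your proposal is correct and follows essentially the same route as the paper: divide the known orders, factor $4^j-1=(2^j-1)(2^j+1)$, and cancel. Your extra check of the $n=1$ case is a small addition not present in the paper's proof but is harmless and arguably an improvement.
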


\begin{proof}
    Recall that $|\Sp(2n,\mathbb{F}_2)| = 2^{n^2} \prod_{j=1}^n (4^j -1)$. As a result,
    \begin{equation*}
        \begin{split}
            \left[\Sp(2n,\mathbb{F}_2) : \mathcal{D}_n\right] &= \frac{|\Sp(2n,\mathbb{F}_2)|}{|\mathcal{D}_n|}\\
            &= \frac{2^{n^2} \prod_{j=1}^n (4^j -1)}{6 \cdot 2^{n^2 - 1} \prod_{j=1}^{n-1}(2^j - 1)}\\
            &= \frac{\prod_{j=1}^n (2^j -1)(2^j + 1)}{3 \prod_{j=1}^{n-1}(2^j - 1)}\\
            &= \frac{1}{3} (2^n -1) \prod_{j=1}^n(2^j + 1).
        \end{split}
    \end{equation*}
\end{proof}

For comparison, we list the values of $|\Sp(2n,\mathbb{F}_2)|$ and $[\Sp(2n,\mathbb{F}_2):\mathcal{D}_n]$ in Table \ref{table:orderindex} for $n = 2, 3, 4, 5$.

\begin{table*}[t]
\centerfloat
\begin{tabular}{l || c | c | c | c}
 & 2 & 3 & 4 & 5 \\
 &&&&\\[-0.8em]
\hline
 &&&&\\[-0.8em]
$|\Sp(2n,\mathbb{F}_2)|$ & 720 & 1451520 & 47377612800 & 24815256521932800 \\
 &&&&\\[-0.8em]
 \hline
 &&&&\\[-0.8em]
$[\Sp(2n,\mathbb{F}_2):\mathcal{D}_n]$ & 15 & 315 & 11475 & 782595
\end{tabular}
\caption{Values of $|\Sp(2n,\mathbb{F}_2)|$ and $[\Sp(2n,\mathbb{F}_2):\mathcal{D}_n]$ for $n = 2, 3, 4, 5$.}
\label{table:orderindex}
\end{table*}

\subsection{Finding a transversal}
In this section, we briefly describe a way to find a transversal for the cosets of $\mathcal{D}_n$ in $\Sp(2n,\mathbb{F}_2)$. A transversal is a set that contains exactly one element for each of the cosets. Once this transversal is found, it can be applied to an arbitrary $n$-qubit input state to calculate all possible distillation statistics that can be achieved using bilocal Clifford circuits. From this set of distillation statistics, the optimal protocol based on any optimality criterion can be selected.

In order to find a transversal, random elements from the symplectic group $\Sp(2n, \F_2)$ are sampled. A sampled element is added to the set of representatives if the corresponding coset is not yet represented in this set. Recall that two elements belong to the same coset if they result in the same distillation statistics (for a general input state). This is the case if and only if the same Pauli strings are mapped to the base. More formally, consider an $n$-qubit pairs bipartite system with base $\mathscr{B}$ in the binary picture. Let $M_1$, $M_2 \in \Sp(2n, \mathbb{F}_2)$. Let $\mathscr{V}$ denote the set of binary vectors that are mapped to the base by $M_1$ and let $\mathscr{W}$ denote the set of binary vectors that are mapped to the base by $M_2$. Then $M_1$ and $M_2$ belong to the same coset if and only if $\mathscr{V} = \mathscr{W}$. Because $M_1$ and $M_2$ permute the binary Pauli vectors, this is equivalent to $M_1^{-1}[\mathscr{B}] = M_2^{-1}[\mathscr{B}]$.

The sampling is continued until the set of representatives has size $[\Sp(2n,\mathbb{F}_2):\mathcal{D}_n]$. Note that finding a transversal in the way described in this section is equivalent to the coupon collector's problem. Hence, it has expected running time ${\mathcal{O}([\Sp(2n,\mathbb{F}_2):\mathcal{D}_n]\log[\Sp(2n,\mathbb{F}_2):\mathcal{D}_n])}$.

\section{Reduction for $n$-fold tensor products of Werner states}\label{sec:wernerstatesection}
Here we describe our reduction of the search space when the input state is an $n$-fold tensor product of Werner states. A Werner state has coefficients $p_I = F_\textrm{in},~p_X = p_Y = p_Z = \frac{1-F_\textrm{in}}{3}$, and its $n$-fold tensor product is highly symmetric --- it is left invariant under any element of $\mathcal{K}_n$, where $$\mathcal{K}_n=\langle\, \{\SWAP_{ij}\}_{1\leq i<j\leq n}\,\cup\,\{\Hgate_i\}_{i=1}^n\,\cup\,\{\Sgate_i\}_{i=1}^n\,\rangle.$$

We leverage this symmetry by noting that the distinct distillation protocols correspond to the double cosets $\mathcal{D}_n\backslash \Sp(n,\F_2)/\mathcal{K}_n$, similar to our argument before for right cosets for general input states. In this section, we describe how one can rewrite an arbitrary symplectic matrix $M$ to another symplectic matrix $M'$ of a specific form, which is in the same double coset as $M$. Such a representative $M'$ of the double coset has a smaller number of free parameters, reducing the search space significantly.

Recall that an overview of the basic matrix operations corresponding to elements of $\Sp(2n,\mathbb{F}_2^{2n})$ is given in Appendix \ref{sec:matrixoperations}.

\begin{lemma}\label{lemma:nfoldwernerreduction}
Let $M\in \Sp(2n,\F_2)$. There exists $M'$ in the double coset $\mathcal{D}_nM\mathcal{K}_n$ that is of the form 
$$
M'=\begin{bmatrix}A&B\\0&A^\transp\end{bmatrix},\  A=\left[\begin{array}{c|c}1&0\\\hline &  \\a&\:I_{n-1}\:\\& \end{array}\right],
$$
$$
B=\left[\begin{array}{c|c}0&b^\transp\\\hline & \\b&E\!+\!ba^\transp\!\\ &\end{array}\right],
$$
where $a,b\in \F_2^{n-1}$ and $E\in \F_2^{(n-1)\times (n-1)}$ is symmetric with zeroes on the diagonal.  
\end{lemma}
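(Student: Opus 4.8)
The strategy is to use the freedom of left-multiplying by $\mathcal{D}_n$ and right-multiplying by $\mathcal{K}_n$ to bring an arbitrary $M=\begin{bmatrix}A&B\\C&D\end{bmatrix}$ into the stated normal form. The key observation is that right-multiplication by the generators of $\mathcal{K}_n$ acts on the column pairs $(e_i,e_{n+i})$, $(e_j,e_{n+j})$: $\SWAP_{ij}$ permutes the pair indices $i\leftrightarrow j$ (for $i,j\geq 1$), $\Hgate_i$ swaps $e_i\leftrightarrow e_{n+i}$, and $\Sgate_i$ adds column $i$ to column $n+i$. Left-multiplication by $\mathcal{D}_n$ gives the row operations catalogued in Appendix \ref{sec:matrixoperations}, subject to the constraint that they fix the base (so the ``special'' qubit $1$ must be treated separately from qubits $2,\dots,n$).

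\textbf{Step 1: reduce $D$ to the identity.} First I would use the same argument as in the proof of Lemma~\ref{thm:distsubgroupgen}: from $M[\mathscr{B}]\subseteq\mathscr{B}$ and $M[\mathscr{P}]\subseteq\mathscr{P}$ one cannot have this for a general $M$, so instead I exploit that $\mathcal{K}_n$ contains all $\SWAP$, $\Hgate$, $\Sgate$ gates. Note that $\langle \CNOT_{ij}\rangle_{1\le j<i\le n}\cup\langle\CNOT_{ij}\rangle_{2\le i<j\le n}\subseteq\mathcal{D}_n$, and together with column permutations/swaps from $\mathcal{K}_n$ one has enough flexibility to perform Gaussian elimination on the last $n$ columns. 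Concretely: since $M$ is invertible, the bottom $2n\times n$ block $\begin{bmatrix}B\\D\end{bmatrix}$ has rank $n$; using $\Hgate_i\in\mathcal{K}_n$ to swap columns $i\leftrightarrow n+i$ and $\SWAP_{ij}\in\mathcal{K}_n$ to reorder, together with $\mathcal{D}_n$ row operations, reduce so that $D=I$, $B=0$. By the symplectic relations \eqref{eq:symplecticblocks} this forces $A=I$ and $C$ symmetric. So every double coset contains a representative $\begin{bmatrix}I&0\\C&I\end{bmatrix}$ with $C=C^\transp$ — the ``CZ-type'' representatives.

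\textbf{Step 2: act on $C$ by the residual group.} Now I must track which operations preserve the form $\begin{bmatrix}I&0\\C&I\end{bmatrix}$. Right-multiplying by $\Sgate_i$ adds $e_{ii}$ to $C$ (changing diagonal entry $i$); right-multiplying by $\Hgate_i$ then conjugating back is more delicate — in fact $\Hgate_i$ on such a matrix does \emph{not} preserve the form, so one combines $\Hgate_i$ with compensating $\mathcal{D}_n$ operations. Right-multiplying by $\SWAP_{ij}$ simultaneously permutes rows and columns $i,j$ of $C$. Left-multiplying by $\CNOT$-type elements of $\mathcal{D}_n$ and their $\Sgate_{ij}=(\Sgate_j\CNOT_{ij})^2$ combinations also acts by symmetric row+column operations on $C$. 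The plan is to use these congruence transformations $C\mapsto P^\transp C P$ (for $P$ in an appropriate subgroup of $\mathrm{GL}_n(\F_2)$ respecting the distinguished first coordinate) plus diagonal adjustments to bring $C$ into the form corresponding to $M'$: reading off the stated $M'$, one sees $M' = \begin{bmatrix}I&0\\C'&I\end{bmatrix}\cdot(\text{something})$ is \emph{not} quite of CZ-type, so actually I expect the final normalization reintroduces a nontrivial $A$ via a $\Sgate_{ij}$-type move that shears $C$'s off-diagonal block into $A$'s lower-left block $a$ and simultaneously populates $B$. The target is: the symmetric matrix $C$ (with the first row/column split off) gets its $(1,1)$ entry and first-row/column entries absorbed, leaving a symmetric zero-diagonal core $E$ on coordinates $2,\dots,n$, plus vectors $a$ (from the $\CNOT$/$\Sgate$ shearing) and $b$ (from an $\Hgate_1$-type move on the first qubit that moves weight between the $A$ and $B$ blocks).

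\textbf{Step 3: verify $M'$ is symplectic and count.} Finally I would check directly that the displayed $M'$ satisfies $M'^\transp\Omega M'=\Omega$ — i.e. verify $A^\transp B$ is symmetric using \eqref{eq:symplecticblocks} with $C=0$, $D=A^\transp$; this reduces to checking $A^\transp B=B^\transp A$, a short computation with the block forms of $A,B$ that pins down why $B$ must have the shape $\begin{bmatrix}0&b^\transp\\b&E+ba^\transp\end{bmatrix}$ (the $ba^\transp$ correction is exactly what makes $A^\transp B$ symmetric given the off-diagonal $a$ in $A$).

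\textbf{Main obstacle.} The hard part is Step 2: carefully bookkeeping the residual action on $C$ after Step 1, in particular showing that the combined group generated by $\mathcal{K}_n$-congruences and $\mathcal{D}_n$-row-reductions is exactly rich enough to (i) kill the diagonal of the core block (using $\Sgate_i$, $\Sgate_{ij}$), (ii) normalize the first-qubit data down to the single vector $b$ and the scalar $1$ in $A$'s corner, and (iii) not be so rich that it over-reduces (one must stop at a genuine double-coset representative, not collapse everything). Keeping straight the asymmetry between qubit $1$ (constrained by $\mathcal{D}_n$ fixing the base) and qubits $2,\dots,n$ (freely permutable by $\SWAP\in\mathcal{K}_n$) is where the argument needs the most care, and is presumably why the statement singles out the first row and column in every block.
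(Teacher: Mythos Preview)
Your Step 1 contains a genuine error: not every double coset $\mathcal{D}_n M \mathcal{K}_n$ contains a CZ-type representative $\begin{bmatrix}I&0\\C&I\end{bmatrix}$. Indeed, $DMK$ has vanishing $B$-block iff $(DMK)[W]=W$ for $W=\mathrm{span}\{e_{n+1},\ldots,e_{2n}\}$, i.e.\ $K[W]=M^{-1}D^{-1}[W]$. Since every element of $\mathcal{K}_n$ preserves the decomposition $\F_2^{2n}=\bigoplus_i\mathrm{span}\{e_i,e_{n+i}\}$, the orbit $\{K[W]:K\in\mathcal{K}_n\}$ consists exactly of the $3^n$ ``product Lagrangians'' $\bigoplus_i\ell_i$ (one line $\ell_i$ per symplectic plane). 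On the other side, $D^{-1}[W]$ ranges over the three Lagrangians containing $\mathscr{B}$, so $M^{-1}D^{-1}[W]$ ranges over the three Lagrangians containing $V:=M^{-1}[\mathscr{B}]$. The condition thus reduces to: some product Lagrangian contains $V$. For $n=3$, take $V=\mathrm{span}\{e_1+e_2,\,e_4+e_5\}$ (the Pauli strings $X_1X_2$ and $Z_1Z_2$). This $V$ is isotropic, yet any product Lagrangian containing $e_1+e_2$ forces $\ell_1=\langle e_1\rangle$, while containing $e_4+e_5$ forces $\ell_1=\langle e_4\rangle$---a contradiction. So for any symplectic $M$ with $M^{-1}[\mathscr{B}]=V$ (such $M$ exist by transitivity of $\Sp(2n,\F_2)$ on isotropic subspaces of a given dimension) your reduction fails.

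The paper's proof avoids this obstruction by not passing through the CZ form at all. It works directly on the $A$-block: choose $M'\in\mathcal{D}_n M\mathcal{K}_n$ maximising the number $k$ for which $M'_{ij}=\delta_{ij}$ on $(i,j)\in[n]\times\{2,\ldots,k\}$, and show $k=n$ via a rank contradiction (if $k<n$, row and column swaps from $\mathcal{D}_n$ and $\mathcal{K}_n$---crucially using $\Hgate_i\in\mathcal{K}_n$ to pull in columns from the $B$-block---force rows $k+1,\ldots,n$ of $M'$ to vanish). With $A$ in the stated shape, the structure of $B$ follows from pairwise symplectic orthogonality of the first $n$ rows, and then left-multiplication by $\Sgate_i,\Sgate_{ij}\in\mathcal{D}_n$ makes $C$ strictly upper triangular, hence $C=0$ by~\eqref{eq:symplecticblocks}. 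Note the target form has $C=0$ but generically $B\neq 0$; your detour through $B=0$ was heading away from the goal, which is also why your Step~2 felt so awkward.
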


\begin{proof}
Let $M'\in \mathcal{D}_nM\mathcal{K}_n$ be such that  
\begin{equation}\label{step1}
M'_{ij}=\delta_{ij}\quad\text{for $(i,j)\in [n]\times \{2,\ldots, k\}$}
\end{equation}
with $1\leq k\leq n$ as large as possible. Note that for $k=1$ the condition is trivially fulfilled. 

\begin{claim}
$k=n$.
\end{claim}
\begin{claimproof}
Suppose that $k<n$. Then $M'_{k+1,k+1}=0$, otherwise we can use row operations on $M'$ (left-multiplication by matrices $\CNOT_{k+1,i}\in \mathcal{D}_n$) to obtain $M'_{i,k+1}=\delta_{i,k+1}$ for all $i\in[n]$ while keeping \eqref{step1}, contradicting the maximality of $k$.

Note that the above condition $M'_{k+1,k+1}=0$ needs to hold after applying operations to $M'$ that preserve the form in equation \eqref{step1}.
Thus, by permuting rows in $\{k+1,\ldots, n\}$ (left-multiplication by matrices $\SWAP_{ij}\in \mathcal{D}_n$) or permuting columns in $\{1,k+1,\ldots, n\}$ (right-multiplication by matrices $\SWAP_{ij}\in \mathcal{K}_n$) we deduce that $M'_{ij}=0$ for $(i,j)\in \{k+1,\ldots, n\}\times \{1,k+1,\ldots, n\}$. Since we can swap column $i$ and $n+i$ by multiplying from the right with $H_i \in \mathcal{K}_n$, we also have $M'_{ij}=0$ for $(i,j)\in \{k+1,\ldots, n\}\times \{n+1,n+k+1,\ldots, 2n\}$. To summarise, we have
\begin{eqnarray*}
M'_{ij}&=&0\quad\text{for $i\in \{k+1,\ldots, n\}$}\\
&&\text{and $j\in [2n]\setminus\{n+2,\ldots,n+k\}$}\\
M'_{ij}&=&\delta_{ij}\quad \text{for $i\in \{2,\ldots, k\}$ and $j\in \{2,\ldots, k\}$}.
\end{eqnarray*}
Since rows $k+1,\ldots, n$ must have zero symplectic inner product with rows $2,\ldots, k$, it follows that rows $k+1,\ldots, n$ must in fact be equal to zero. Since $M'$ has full rank, this implies that $k=n$.~\end{claimproof}

Consider the first row of $M'$. We have $M'_{1,j}=0$ for $j=2,\ldots, n$. If $M'_{1,1}=M'_{1,n+1}=0$, then the fact that this row has zero symplectic inner product with rows $2,\ldots, n$ implies that the first row is equal to zero, which is not possible as $M'$ has full rank. So by multiplying from the right by $I$, $\Hgate_1$, or $\Sgate_1\Hgate_1$ which are in $\mathcal{K}_n$, we may assume that $M'_{1,1}=1$ and $M'_{1,n+1}=0$. 

Writing $M'=\big[\begin{smallmatrix}A&B\\C&D\end{smallmatrix}\big]$, we see that $A$ and $B$ have the  following form:
$$
A=\left[\begin{array}{c|c}1&0\\\hline &  \\a&\:I_{n-1}\:\\& \end{array}\right], \quad B=\left[\begin{array}{c|c}0&d^\transp\\\hline & \\b&\ \ E'\ \ \\ &\end{array}\right].
$$
Since row $1$ has zero symplectic inner product with rows $2,\ldots, n$, it follows that $d=b$. Note that for $1\leq i<j\leq n-1$ the symplectic inner product of rows $i+1$ and $j+1$ is equal to $a_ib_j+E'_{ji}+a_jb_i+E'_{ij}$. Since this inner product is zero, the matrix $E:=E'+ba^\transp$ is symmetric. By multiplying from the right by $\Hgate_i\Sgate_i\Hgate_i\in \mathcal{K}_n$ ($i=2,\ldots, n$) if necessary, we may set the diagonal elements of $E'$ such that the diagonal elements of $E$ are zero.

Recall from Lemma \ref{thm:distsubgroupgen} that $\Sgate_{ij}:=(\Sgate_j\CNOT_{ij})^2\in \mathcal{D}_n$ for $1\leq j<i$. Recall furthermore that left-multiplication of $M'$ by $\Sgate_{ij}$ amounts to adding row $i$ to row $n+j$ and adding rows $i$ and $j$ to row $n+i$. By left-multiplication by elements $\Sgate_{ij}\in \mathcal{D}_n$ and $\Sgate_i\in \mathcal{D}_n$ we may (without changing the first $n$ rows of $M'$) assume that $C$ is a strictly upper triangular matrix. Since the first $n$ columns of $M'$ must have pairwise zero symplectic inner product, this implies that in fact $C=0$. Since $A^\transp D+C^\transp B=I_n$, it follows that $D=(A^\transp)^{-1}=A^\transp$, where we have used that $A$ is self-inverse.

\end{proof}

Note that for any permutation $\pi\in S_{n-1}$, we can replace $a,b,E$ by $\pi(a),\pi(b),\pi(E)$ (permuting both rows and columns) by multiplying $M'$ simultaneously from the left and the right by elements $\SWAP_{ij}$, since $\SWAP_{ij}$ is an element of $\mathcal{D}_n$ and $\mathcal{K}_n$ for $2\leq i<j\neq n$. Also, we can replace $(a,b)$ by $(b,a)$ or $(a,a+b)$ by multiplication from the left and right by elements from $\{\Sgate_1,\Hgate_1\}$. Hence, to cover all cases, it suffices to enumerate over the triples $(a,b,E)$ where $a\leq b\leq a+b$ and $E$ runs over the adjacency matrices of graphs on $n-1$ nodes (up to isomorphism).

\section{Optimisation results}\label{sec:results}
In the previous sections we have outlined our methods for finding all possible bilocal Clifford protocols, which were described in Section \ref{sec:bilocalClifford}. In the following we report our findings, first for up to $n=5$ general Bell-diagonal input states, second for up to $n=8$ identical Werner states.

\subsection{Achieved distillation statistics for general input states}
In Fig.~\ref{fig:fidelitypsucscatter} we show the achievable $\left(p_\textrm{suc},~F_\textrm{out}\right)$ pairs for $n = 2, 3, 4, 5$ copies of a state with coefficients $p_I = 0.7$, $p_X = 0.15$, $p_Y = 0.10$, $p_Z = 0.05$. We also plot the envelope, indicating the best performing schemes. Moreover, our results for $n = 5$ clearly show that picking an arbitrary coset does not give a good protocol in general.

Furthermore, we consider the $n=2$ scenario where the two input states are equal, i.e.~both have equal values for the $p_I, p_X, p_Y, p_Z$ parameters. By comparing all analytic expressions of the output fidelity as a function of $p_I, p_X, p_Y, p_Z$, we find that the DEJMPS protocol achieves the highest fidelity out of all bilocal Clifford protocols (see~\cite{code} for the details). 

While we do not explore this direction, the results can also be applied to less symmetric cases, i.e.~when the $n$ pairs are not the same. This situation is, for example, relevant when states arrive at different times, and thus experience different amounts of decoherence. 

\begin{figure}
\centerfloat
	\includegraphics[clip,  width=0.5\textwidth, trim = 2.5mm 3.0mm 2.8mm 0mm]{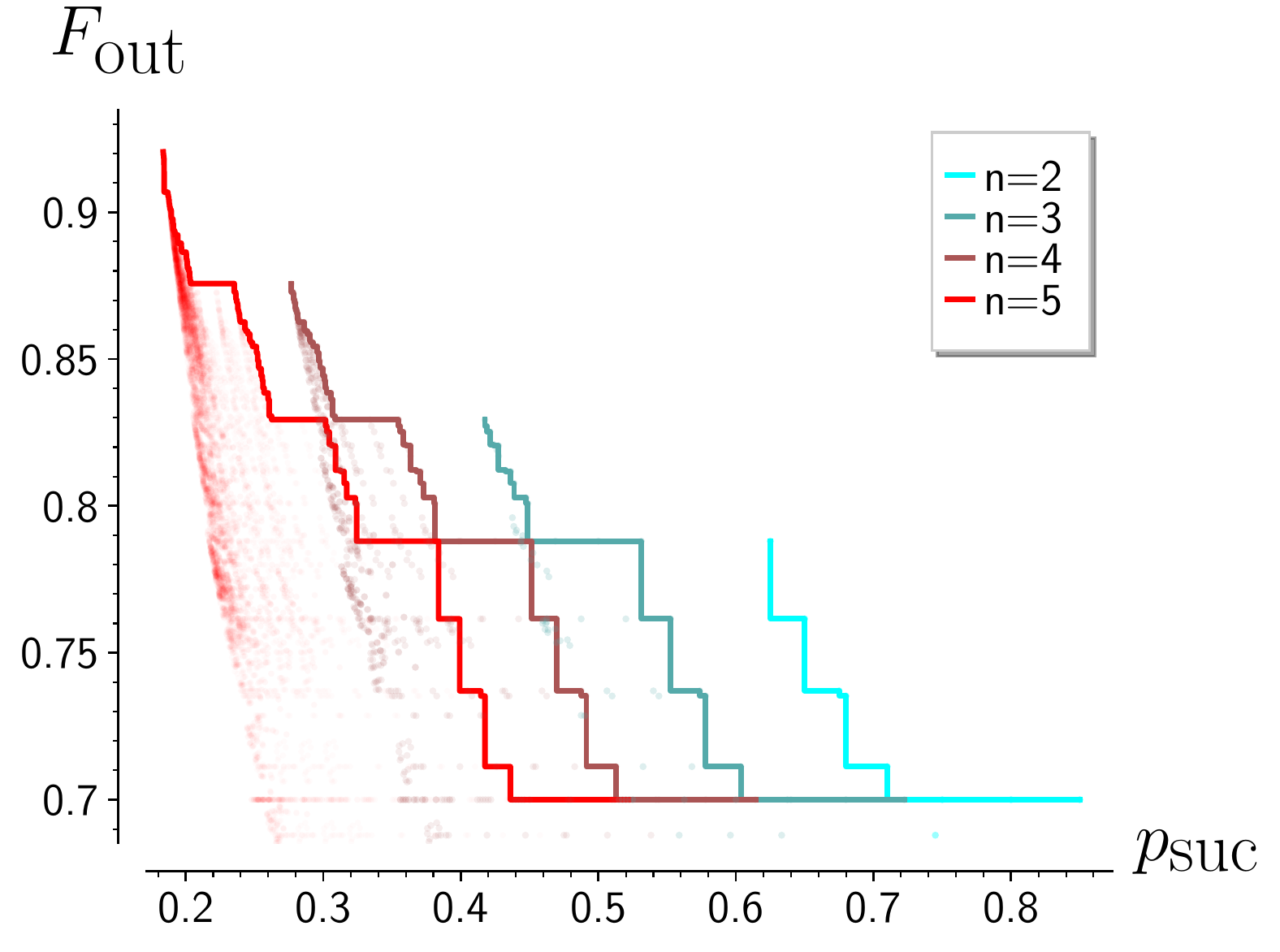}
	\vspace*{-3mm}
	\caption{Achievable $\left(p_\textrm{suc},~F_\textrm{out}\right)$ pairs for $n = 2$ to $5$ copies of a state with coefficients $p_I = 0.7$, $p_X = 0.15$, $p_Y = 0.10$, $p_Z = 0.05$. The highest achievable pairs are indicated by a solid line for each number of copies. Not included in the plot are those distillation protocols with fidelity smaller than $F=0.68$.}
	\label{fig:fidelitypsucscatter}
\end{figure}

\subsection{Achieved distillation statistics for $n$-fold Werner states}
Here we show our results for the case of an $n$-fold tensor product of a Werner state. 
First, we list the number of cases to check (i.e.~the number of triples $(a, b, E)$, see Section~\ref{sec:wernerstatesection}) and the number of distinct distillation protocols for this scenario in Table~\ref{table:wernercases}.

\begin{table*}[t]
\centerfloat
\begin{tabular}{l || c | c | c | c | c | c | c}
 & 2 & 3 & 4 & 5 & 6 & 7 & 8 \\
 &&&&&&\\[-0.8em]
\hline
&&&&&&\\[-0.8em]
Cases & 2 & 10 & 60 & 561 & 6358 & 111540 & 2917980 \\
&&&&&&\\[-0.8em]
\hline
&&&&&&\\[-0.8em]
Distinct protocols & 2 & 5 & 13 & 34 & 108 & 379 & 1736 
\end{tabular}
\caption{Number of cases to check and number of distinct distillation protocols for $n$ identical Werner states.}
\label{table:wernercases}
\end{table*}

The number of cases and distinct protocols still rapidly grow with $n$, but our reduction allowed to consider all possible distillation protocols for $n=8$ in about a day of computer run-time. This should be compared with a naive optimisation over all elements of the symplectic group --- for $n=8$ the ratio between the order of the symplectic group and the number of cases to check is approximately $2\cdot 10^{34}$. For $n=9$ and higher however, our current method becomes infeasible, requiring too many cases to consider. One could imagine improving on Lemma V.1 as to reduce the gap between the number of cases to check and the number of distinct distillation protocols. This could potentially only allow for an enumeration up to $n\approx 11$. This can be made more quantitative by considering that a lower bound on the number of double cosets is given by $\frac{\left|\Sp(2n,\mathbb{F}_2)\right|}{\left|\mathcal{D}_n\right|\left|\mathcal{K}_n\right|} = \mathcal{O}\left(\frac{2^{\frac{n^2}{2}}}{6^n n!}\right)$, since the size of a double coset can be at most $\left|\mathcal{D}_n\right|\cdot \left|\mathcal{K}_n\right|$. We note here that we are approximating the number of distinct distillation statistics by the number of double cosets. With such a lower bound, it is clear that a full enumeration becomes infeasible for $n\gtrapprox 11$, even in the best-case scenario.

We note here that the large number of distinct protocols only means that a full enumeration of the distillation protocols becomes infeasible for larger $n$. In particular, it does not necessarily preclude an optimisation of the distillation protocols for a given metric, such as the output fidelity. In this paper, however, we only consider an optimisation by first fully enumerating all distinct protocols.

In order to gauge the advantage of the optimal protocols that we find for Werner states, we compare them with the class of protocols we call \emph{concatenated DEJMPS protocols}. These are bilocal Clifford protocols that are built from multiple iterations of the DEJMPS protocol~\cite{Deutsch1996}, see Appendix~\ref{sec:dejmpsprotocols} for more information. The concatenated DEJMPS protocols form a natural generalisation of the (nested) entanglement pumping protocols~\cite{dur2007entanglement}.

We first investigate the increase in fidelity $F_\textrm{out}-F_\textrm{in}$ conditioned on the success of the distillation protocol. 
We plot the increase in fidelity as a function of the input fidelity $F_\textrm{in}$ for $n= 2,3,\ldots, 8$ in Fig.~\ref{fig:fidelity1}. The dotted lines correspond to the concatenated DEJMPS protocols, the solid lines correspond to the protocols that achieve the highest output fidelity found with our optimisation.
For completeness, we show the success probabilities and fidelities for the optimised protocols for $n = 2, 3,\ldots, 8$ in Tables~\ref{table:distillationstatistics}, \ref{table:distillationstatistics2}, \ref{table:distillationstatisticsinfidelity} and \ref{table:distillationstatistics2infidelity} in Appendix~\ref{sec:diststatistics}.

\begin{figure}
\centerfloat
	\includegraphics[clip,  width=0.5\textwidth, trim = 3.0mm 3.0mm 3.1mm 0mm]{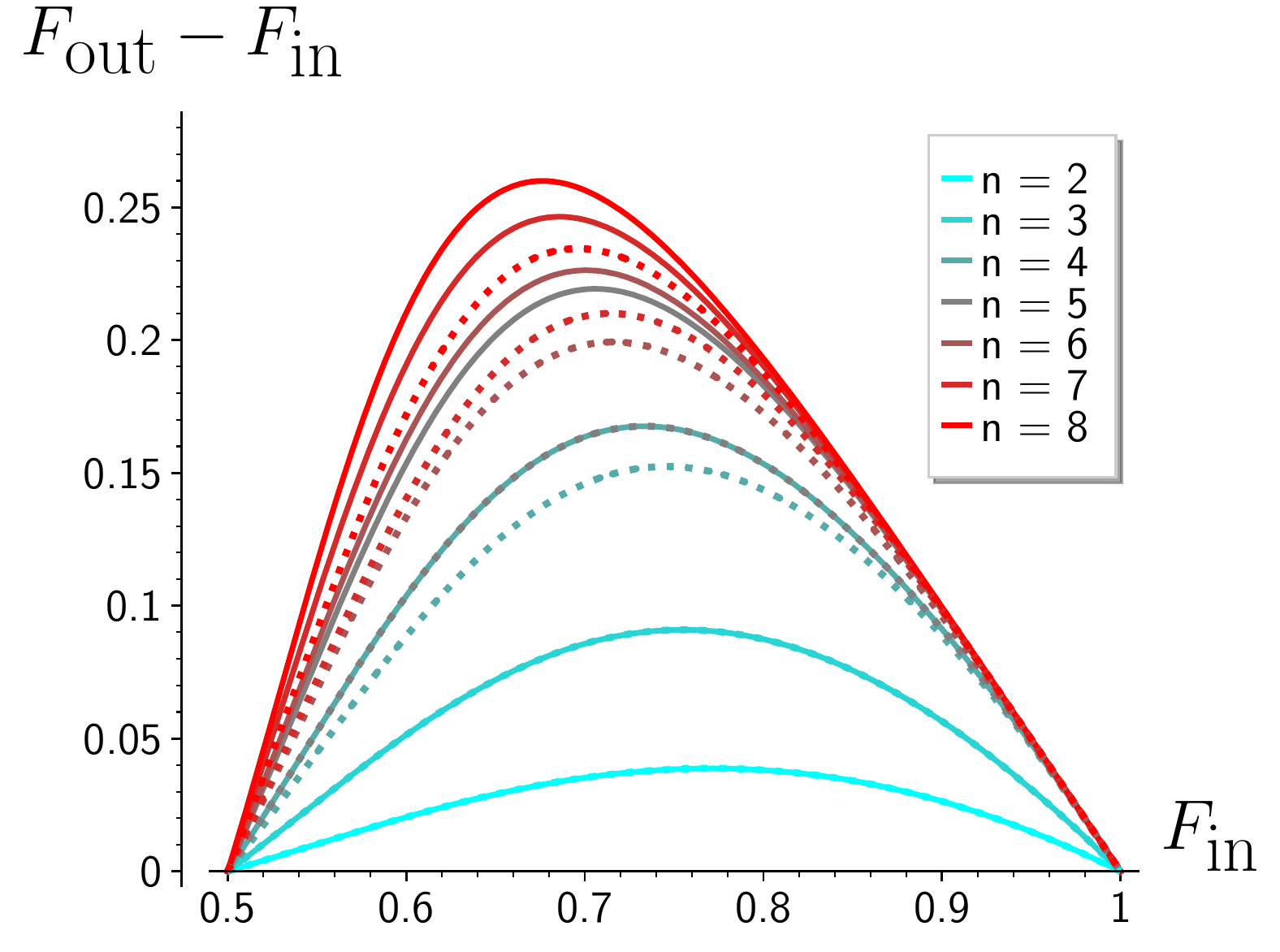}
	\vspace*{-3mm}
	\caption{Comparison between the increase in fidelity $F_\textrm{out}-F_\textrm{in}$ with our optimisation (solid) and concatenated DEJMPS protocol (dotted), for $n=2$ to $8$ identical Werner states with fidelity $F_\textrm{in}$. Note how the $n=5$ concatenated DEJMPS protocol overlaps with an optimised $n=4$ protocol.}
	\label{fig:fidelity1}
\end{figure}

Let us now discuss Fig.~\ref{fig:fidelity1}. First we observe that for $n=2, 3$, the optimal protocols correspond to the original DEJMPS~\cite{Deutsch1996} and double-selection~\cite{fujii2009entanglement} protocols. However, for $n>3$, we find distillation protocols that outperform the concatenated DEJMPS protocols. 

We find that the optimal protocol for $n=4$ achieves the same fidelity as the concatenated DEJMPS protocol for $n=5$, and can be executed with a circuit of the same depth as the concatenated DEJMPS protocol. This protocol achieves the same distillation statistics as the protocol found with different means in the recent work from~\cite{zhao2021loccnet}. 

For $n=5$ there is a large gap between the optimised protocols and the concatenated DEJMPS protocol. We make this now more quantitative by expanding the $F_\textrm{out}$ for high input fidelity $F_\textrm{in}\approx 1$. For $n=5$, the concatenated DEJMPS protocol has quadratic scaling in the infidelity,

\begin{align}
    1-\frac{2}{3}\left(1-F\right)^2 + \mathcal{O}\left(\left(1-F\right)^3\right)\ ,
\end{align}

while the optimised protocol has a cubic scaling in the infidelity
\begin{align}
    1-\frac{10}{9}\left(1-F\right)^3 + \mathcal{O}\left(\left(1-F\right)^4\right)\ .
\end{align}

This is particularly surprising, since previous protocols with five or less pairs~\cite{krastanov2019optimized} have a scaling that is at most quadratic in the infidelity. We list the scaling of the found protocols in Table.~\ref{table:distillationstatisticsinfidelityseries}.

Next, we investigate the behavior of the protocols for high fidelities $F_\textrm{in} \approx 1$. In Fig.~\ref{fig:fidelity2} we plot the infidelity $1-F_\textrm{out}$ as a function of the input fidelity $F_\textrm{in}$. We observe that it is possible to reach fidelities of around $0.999$ using six copies of Werner states with fidelity $F_\textrm{in} = 0.9$. We do not plot the results for $n = 2, 3$ since we find no improvements with respect to previous protocols.

\begin{figure}
\centerfloat
	\includegraphics[clip,  width=0.5\textwidth, trim = 3.0mm 3.0mm 3.1mm 0mm]{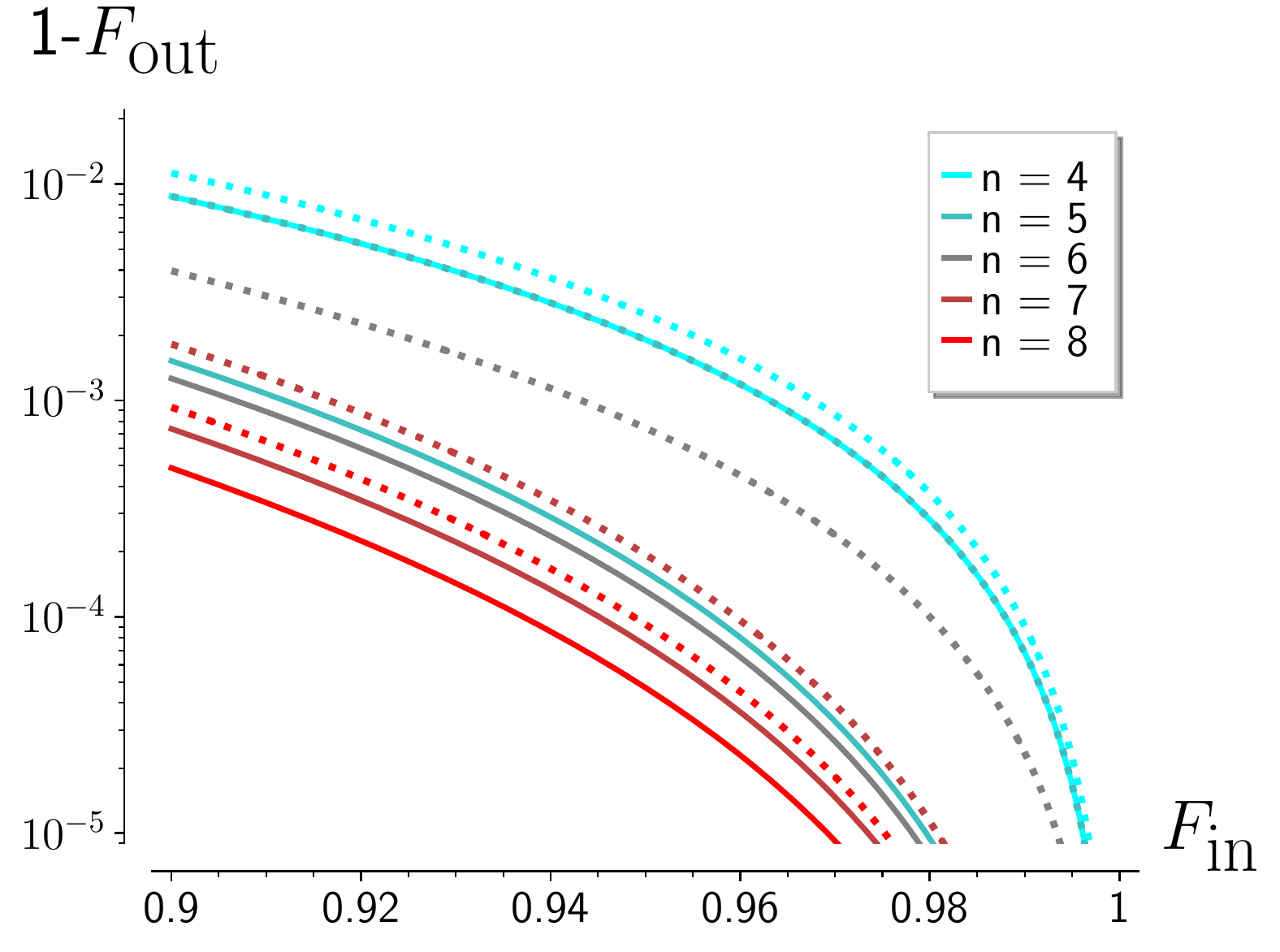}
	\vspace*{-3mm}
	\caption{Comparison between the achieved infidelities $1-F_\textrm{out}$ with our optimisation (solid) and concatenated DEJMPS protocol (dotted), for $n=4$ to $8$ identical Werner states with fidelity $F_\textrm{in}$.}
	\label{fig:fidelity2}
\end{figure}

We have seen that the optimised distillation protocols are capable of achieving a higher fidelity than the concatenated DEJMPS protocols. However, the optimised distillation protocols also have a lower success probability. This motivates us to investigate three metrics, each of which combines the success probability and the quality of the resultant state. As the first metric, we use the distillable entanglement rate which we approximate by combining the distillation protocol together with a \emph{hashing} protocol~\cite{bennett1996purification}. That is, given $n$ entangled pairs, we first perform an $n$-to-1 distillation protocol and then use the output as input for the hashing protocol.
The rate $r$ at which this procedure produces maximally entangled state is given by

\begin{align}
    r = \frac{\left(1-H(p)\right)\cdot p_\textrm{suc}}{n}\ ,
\end{align}
where $H(p)$ is the entropy in bits of the probability distribution $p = \left( p_\mathrm{I}, p_X, p_Y, p_Z\right)$ corresponding to the output state. This metric has been used previously and is sometimes called the hashing yield~\cite{krastanov2019optimized}.

We compare the achieved rate of all found distillation protocols and the concatenated DEJMPS protocols in Fig.~\ref{fig:rate1}. We show both achieved rates and the ratio between them. We find that for $n>3$ and fidelities $\lesssim 0.76$ the optimal bilocal Clifford protocols achieve up to rates three times greater than concatenated DEJMPS protocols. Conversely, for high fidelities it suffices to use concatenated DEJMPS protocols if one is interested in maximising the asymptotic rate.

For the second metric, we consider the more practical application of achieving a certain threshold fidelity $F_\textrm{tar}$ using (at most) one round of distillation. While there exist a number of applications that require a minimum fidelity, we consider here device-independent quantum key distribution. If one assumes only depolarising noise, a bound on the minimum fidelity is given~\cite{tan2020improved} by $F_\textrm{tar} = 0.930025$ to perform device-independent quantum key distribution. In what follows, we will assume that it is necessary to achieve $F_\textrm{tar}$, and that the other coefficients of the output state are equal.

In Fig.~\ref{fig:rate_distill_once} we show the average rate at which entanglement can be distilled to $F_\textrm{tar}$ using a single round of distillation. We find a similar behaviour as in Fig.~\ref{fig:rate1}, where for higher fidelities ($\approx 0.78$) it suffices to perform concatenated DEJMPS protocols, while for lower fidelities the optimised protocols achieve a larger rate.

Finally, we consider the metric of the success probability times the \emph{relative entropy of entanglement}~\cite{vedral1997quantifying, vedral1998entanglement, vedral2002role}. The relative entropy of entanglement (REE) is an upper bound on the distillable entanglement~\cite{vedral2002role}, and can be computed exactly for Bell-diagonal states~\cite{vedral1997quantifying}. The success probability times the REE is a quantity that has been used to capture how well a protocol concentrates the present entanglement, and has even been used to show optimality of some distillation protocols, see~\cite{Rozpdek2018}. We note that we have found that the success probability times the REE decreases as the number of input states increases. We thus consider maximising the above quantity for fixed values of $n$. To this end, we plot the difference between the success probability and the REE for all optimised protocols and concatenated DEJMPS protocols in Fig.~\ref{fig:rel_ent} for $n = 2,\ldots, 8$. We find for all $n>3$ that the full optimisation over bilocal Clifford protocols allows for a larger value of the success probability times the REE, in particular for higher fidelities. This should be contrasted with the results from Figs.~\ref{fig:rate1} and~\ref{fig:rate_distill_once}, where the full optimisation only showed an improvement over concatenated DEJMPS protocols for lower fidelities.

\begin{figure}
\centerfloat
\begin{subfigure}[b]{0.52\textwidth}
	\includegraphics[clip,  width=0.996\textwidth, trim = 3.0mm 3.0mm 3.1mm 0mm]{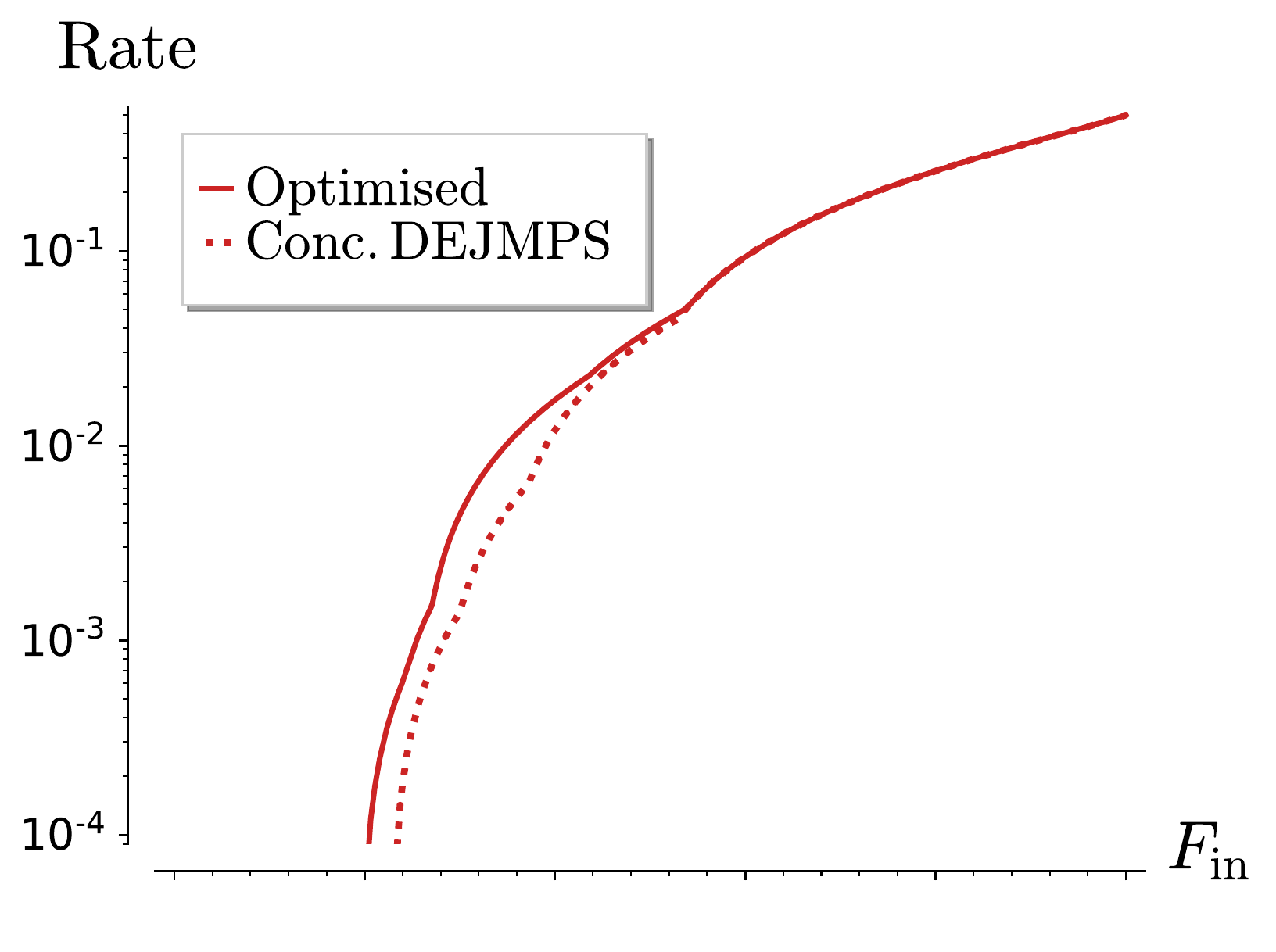}
	\vspace*{-8mm}
\end{subfigure}

\begin{subfigure}[b]{0.5\textwidth}
	\includegraphics[clip,  width=0.978\textwidth, trim = 3.0mm 3.0mm 3.1mm 0mm]{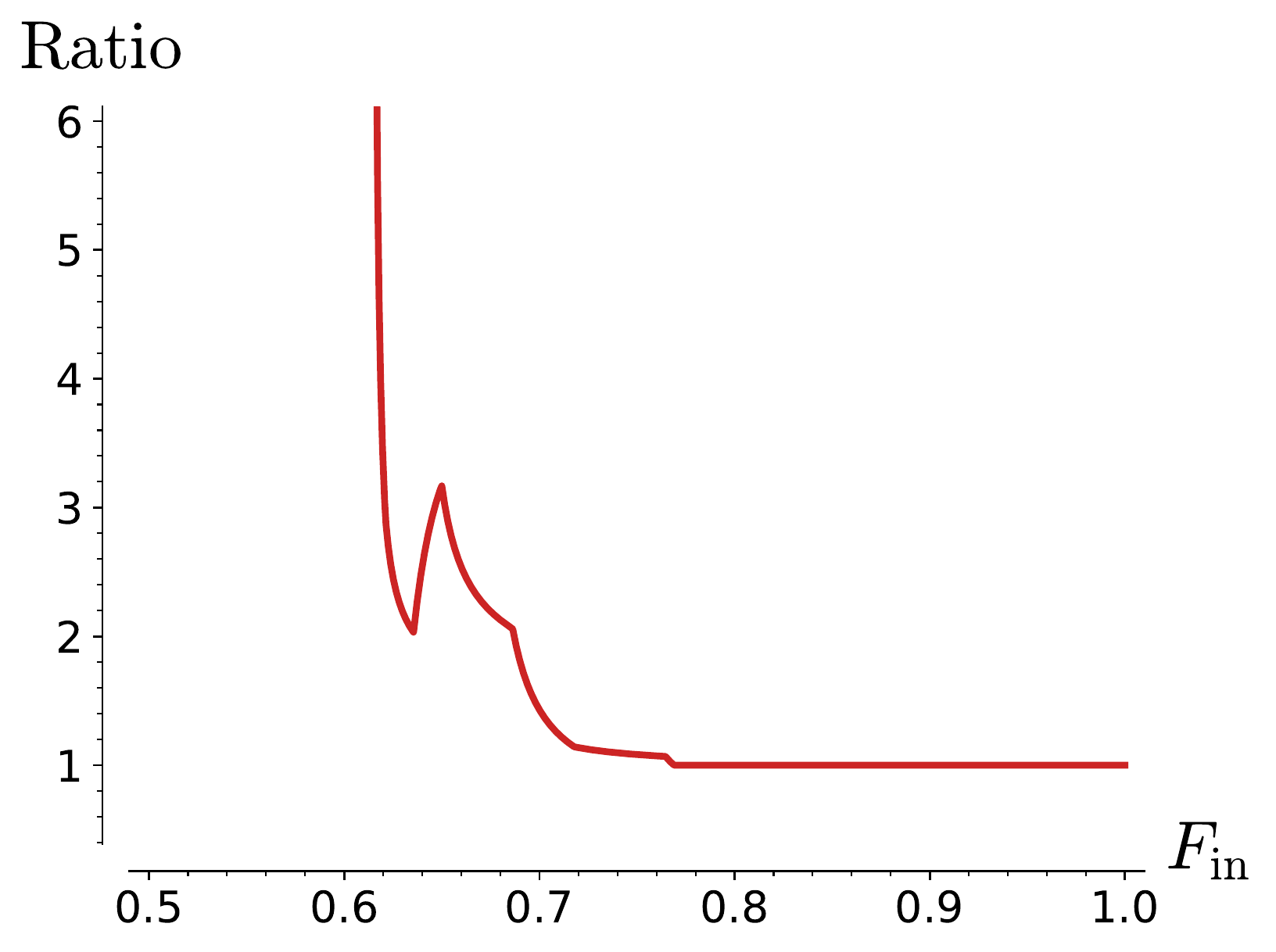}
	\vspace*{-3mm}
\end{subfigure}
\caption{Comparison between the achieved rates after distilling and then hashing with our optimisation (solid) and the concatenated DEJMPS (dotted) protocol. For both cases, we take the envelope of all protocols on $n=2$ to $8$ identical Werner states with fidelity $F_\textrm{in}$. Top) achieved rates with our optimisation and concatenated DEJMPS protocols. Bottom) Ratio between the rate with our optimisation and concatenated DEJMPS protocols.}
\label{fig:rate1}
\end{figure}

Let us conclude with an investigation of circuits that achieve the highest fidelity for $n=4$ to $8$.
Interestingly, these protocols can be implemented with low-depth circuits. We performed a search over circuits of the form described in Appendix \ref{sec:distcircuits}, to find circuits that achieve the highest fidelity. We report these circuits in Appendix~\ref{sec:distcircuits}. For $n=4$ to $8$, we find a total number of two-qubit gates of $4$, $7$, $8$, $11$ and $13$. Furthermore, the corresponding circuit depths are $3$, $5$, $6$, $6$ and $7$, respectively. For comparison, the circuit from~\cite{zhao2021loccnet} for $n=4$ pairs has $4$ two-qubit gates and depth $5$. This protocol can be converted to our optimal $n=4$ protocol by left-multiplication with elements in $\mathcal{D}_n$ and right-multiplication  with elements in $\mathcal{K}_n$. Therefore, both protocols achieve the same distillation statistics. The protocol from~\cite{matsuo2019quantum} for $n=5$ pairs, which achieves the same fidelity and success probability as the concatenated DEJMPS protocol, has $4$ two-qubit gates and depth $4$.
We note here that the fidelity and success probability do not necessarily need to correspond to a unique specific distillation protocol. As an example, for $n=8$ there are four distinct protocols that achieve the highest fidelity. These protocols have the same fidelity and success probability, but differ in their $F_i$ components.

\begin{figure}
\centerfloat
	\includegraphics[clip,  width=0.5\textwidth, trim = 3.0mm 3.0mm 3.1mm 0mm]{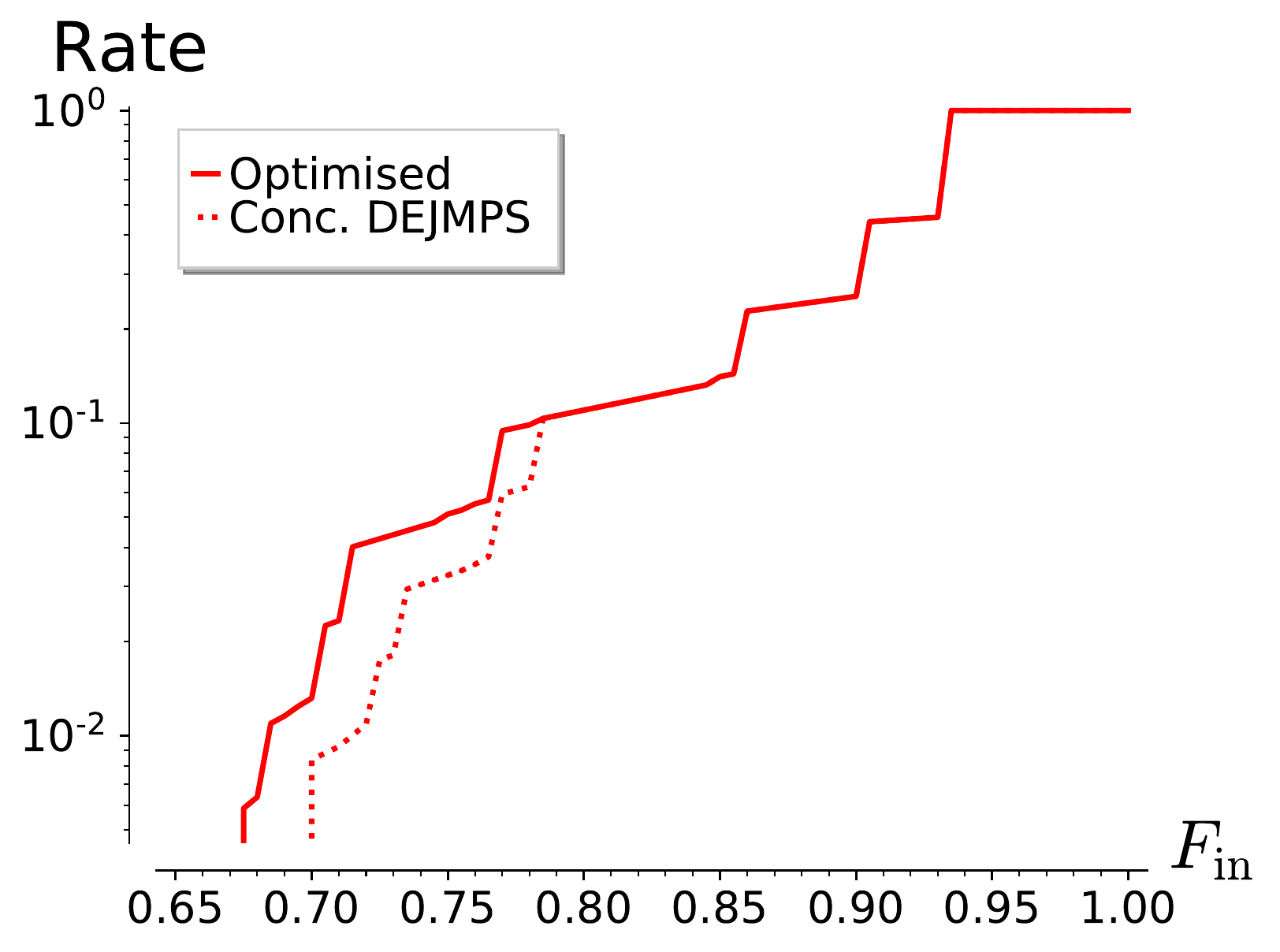}
\caption{Average rate at which entanglement can be distilled to a minimum fidelity of $F_{\textrm{tar}} = 0.930025$, using both our optimisation (solid) and concatenated DEJMPS (dotted) protocols.}
\label{fig:rate_distill_once}
\end{figure}

\begin{figure}
\centerfloat
	\includegraphics[clip,  width=0.5\textwidth, trim = 3.0mm 3.0mm 3.1mm 0mm]{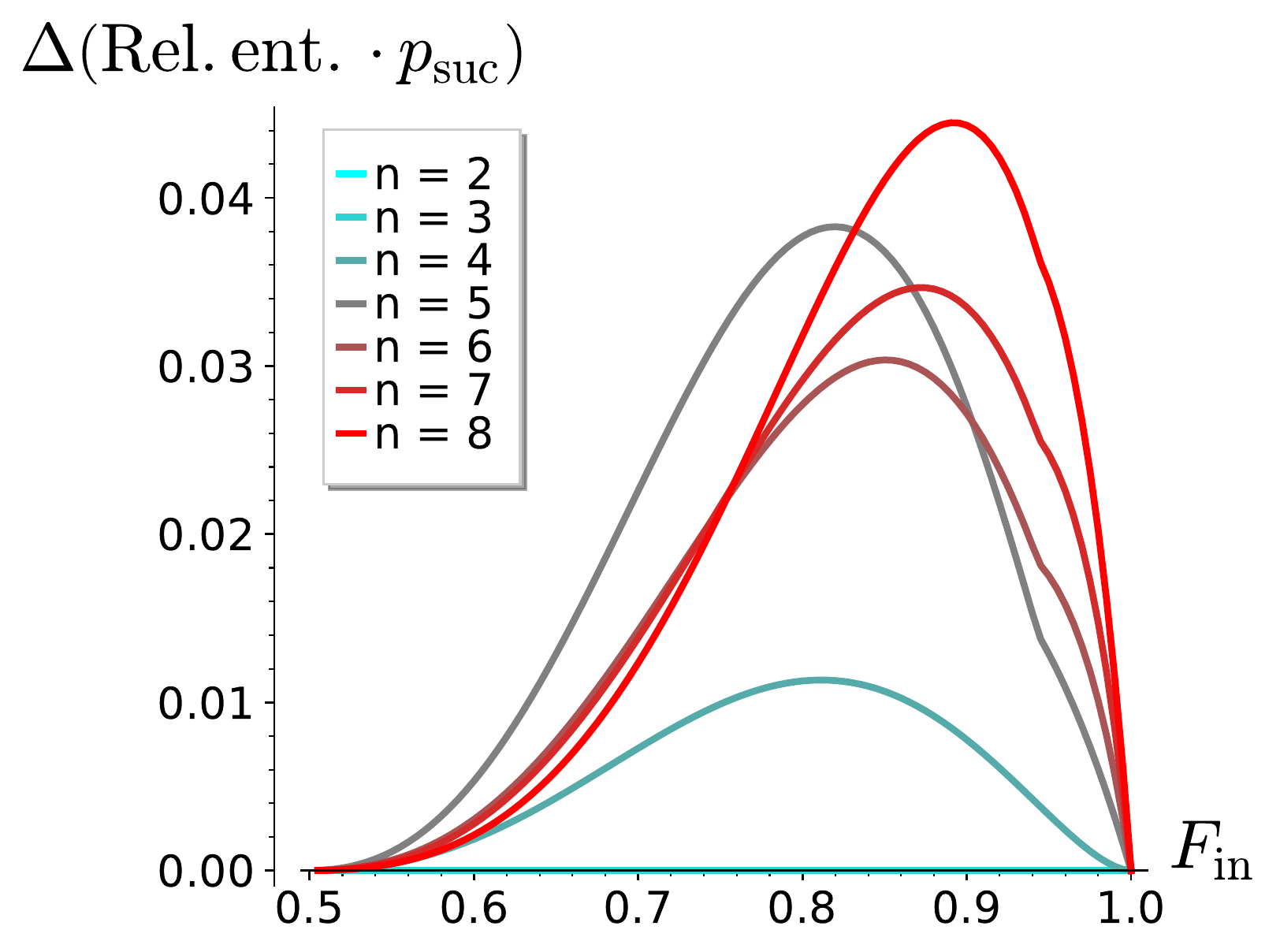}
\caption{Difference between the product of the success probability and relative entropy of entanglement of the resultant state, optimised over all bilocal Clifford protocols and all concatenated DEJMPS protocols.}
\label{fig:rel_ent}
\end{figure}

\section{Conclusions and discussions}\label{sec:conc}
Our goal in this paper was to find good distillation protocols requiring modest resources. For this, we introduced the class of bilocal Clifford protocols which generalises many existing protocols. The protocols in this class require only a single round of communication between the end parties and the implementation of Clifford gates. Within this class, we leveraged group theoretic tools to find all distillation protocols for up to $n=5$ pairs for general Bell-diagonal states and up to $n=8$ pairs for the $n$-fold tensor product of a Werner state. 

Some of the protocols that we found strongly improve upon the fidelities and rates of previous protocols.
Moreover, we give explicit circuits for the optimal protocols for the $n$-fold Werner state case, with $n=2$ to $n=8$. These circuits have comparable depth and number of two-qubit gates as previous protocols, indicating the experimental feasibility of the new protocols. If the improved performance holds with noisy operations, then it will translate in improved forecasts for the performance of near-term quantum networks~\cite{matsuo2019quantum,victora2020purification} or distributed quantum computation~\cite{nickerson2013topological}. 
Finally, since we have enumerated all bilocal Clifford protocols up to $n=5$, it is possible to pick and choose the protocol that maximises any figure of merit for any particular set of input states. Our software can be found at~\cite{code}.

In this work we considered distilling one entangled pair out of $n$ pairs. The results here could be extended to $n$ to $m$ distillation protocols by generalising Lemma~\ref{lemma:nfoldwernerreduction} and the characterisation of the distillation subgroup to the case of $n$ to $m$ distillation. Such distillation protocols would allow for a more refined trade-off between the fidelity and the success probability/rate. Another interesting avenue would be to generalise the tools to higher dimensional entangled states. 

\section*{Acknowledgements}
This work was supported by the QIA project (funded by European Union's Horizon 2020, Grant Agreement No. 820445) and by the Netherlands Organization for Scientific Research (NWO/OCW), as part of the Quantum Software Consortium program (Project No. 024.003.037/3368). The authors thank T.~Coopmans for initial discussions, and F.~Rozp\k{e}dek and S.~B\"{a}uml for valuable feedback on the manuscript.

\bibliographystyle{quantum}
\bibliography{references}
\newpage
\onecolumngrid
\appendix

\section{Background on binary picture}
\label{sec:backgroundbinary}
For completeness, we give here more background and derivations on the binary picture used in this work. Firstly, we give a derivation of equation \eqref{eq:paulimatrices}. Suppose that we have two elements $\tau_{v}, \tau_{w} \in \mathcal{\thickbar P}_n$, with $v, w \in \mathbb{F}_2^{2n}$. Then
\begin{equation}
    \begin{split}
    \tau_{v}\tau_{w} & = (\tau_{v_1v_{n+1}}\otimes\dots\otimes \tau_{v_nv_{2n}})(\tau_{w_1w_{n+1}}\otimes\dots\otimes \tau_{w_nw_{2n}})\\
    & = \bigotimes_{k=1}^n \tau_{v_kv_{n+k}}\tau_{w_kw_{n+k}}.
    \end{split}
\end{equation}

For all $k \in \{1,\dots,n\}$, we have
\begin{equation}
\begin{split}
    \tau_{v_kv_{n+k}}\tau_{w_kw_{n+k}}
    & = X^{v_k}Z^{v_{n+k}}X^{w_k}Z^{w_{n+k}} \\
    & = X^{v_k}(-1)^{v_{n+k}w_k}X^{w_k}Z^{v_{n+k}}Z^{w_{n+k}} \\
    & = (-1)^{v_{n+k}w_k}X^{v_k+w_k}Z^{v_{n+k}+w_{n+k}}\\
    & = (-1)^{v_{n+k}w_k}\tau_{v_k+w_k,v_{n+k}+w_{n+k}}.
    \end{split}
\end{equation}

As a result,
\begin{equation}
    \begin{split}
    \tau_{v}\tau_{w} &= \bigotimes_{k=1}^n (-1)^{v_{n+k}w_k}\tau_{v_k+w_k,v_{n+k}+w_{n+k}} \\
    & = (-1)^{\sum_{k=1}^n v_{n+k}w_k}\ \tau_{v+w}
    \end{split}
\end{equation}

We can rewrite $\sum_{k=1}^n v_{n+k}w_k$ in terms of the vectors $v$ and $w$: 
\begin{equation}
    \sum_{k=1}^n v_{n+k}w_k = w^\transp \Xi v, \qquad \ \Xi = \begin{bmatrix} 0 & I_n \\ 0 & 0 \end{bmatrix}.
\end{equation}

Hence, the product of $\tau_v$ and $\tau_w$ is given by
\begin{equation}
    \tau_{v}\tau_{w} = (-1)^{w^\transp \Xi v}\tau_{v + w}.
    \label{eq:binaryrepprod}
\end{equation}

Combining equation \eqref{eq:binaryrepprod} for $\tau_v\tau_w$ and $\tau_v\tau_w$, we finally obtain
\begin{equation}
    \tau_v\tau_w = (-1)^{w^\transp\Xi v + v^\transp\Xi w}\tau_w\tau_v = (-1)^{v^\transp\Xi^\transp w + v^\transp\Xi w}\tau_w\tau_v = (-1)^{v^\transp \Omega w}\tau_w\tau_v, \quad \Omega = \Xi + \Xi^\transp = \begin{bmatrix} 0 & I_n \\ I_n & 0 \end{bmatrix}.
\end{equation}

Let $C \in \mathcal{C}_n$ be a Clifford operation and $f: \mathcal{\thickbar P}_n \rightarrow \mathcal{\thickbar P}_n$, $f(P) = CPC^\dagger$ be the corresponding automorphism. Let $\pi: \F^{2n}_2 \rightarrow \F^{2n}_2$ be the representation of $f$ in the binary picture. Let $v, w \in \F_2^{2n}$.
Then we know that $C\tau_{v+w}C^\dagger = (-1)^{w^\transp\Xi v} C\tau_{v}\tau_{w}C^\dagger = (-1)^{w^\transp\Xi v}C\tau_{v}C^\dagger C\tau_{w}C^\dagger$. In the binary representation, the prefactor $(-1)^{w^\transp\Xi v}$ does not make a difference. Thus, $\pi(v+w) = \pi(v) + \pi(w)$, so $\pi$ is a linear map, and there exists a binary $2n \times 2n$ matrix $M$ such that $\pi(v) = Mv$ for all $v \in \F_2^{2n}$.

\section{Basic row/column operations corresponding to symplectic matrices}
\label{sec:matrixoperations}
In this section we give an overview of the basic row and column operations used in the proofs of Lemma \ref{thm:distsubgroupgen}, Theorem \ref{thm:orderDn} and Lemma \ref{lemma:nfoldwernerreduction} and the corresponding elements of $\Sp(2n, \mathbb{F}_2)$.
Let $M \in \text{Mat}_{2n \times k}(\mathbb{F}_2)$ be a binary $(2n \times k)$-matrix with $k \geq 1$. Multiplying $M$ from the left by an element of $\Sp(2n, \mathbb{F}_2)$ results in basic row operations on the rows of $M$. The basic row operations corresponding to left multiplication by elements of $\Sp(2n, \mathbb{F}_2)$ are summarized in Table \ref{tab:rowoperations} \cite{Maslov2018}.

\begin{table}[h]
    \centering
    {\begin{tabular}{l|l}
    Element of $\Sp(2n, \mathbb{F}_2)$ \ & Row operation \\ &\\[-0.8em]
	\hline \hline
	&\\[-0.8em]
    $H_i$ & Swapping rows $i$ and $n+i$\\
    &\\[-0.8em]
	\hline
	&\\[-0.8em]
    $S_i$ & Adding row $i$ to row $n+i$\\
    &\\[-0.8em]
	\hline
	&\\[-0.8em]
    $\CNOT_{ij}$ & Adding row $i$ to row $j$ and adding row $n+j$ to row $n+i$\\
    &\\[-0.8em]
	\hline
	&\\[-0.8em]
	$S_{ij} = (S_j\CNOT_{ij})^2$ & Adding row $i$ to row $n+j$ and adding rows $i$ and $j$ to row $n+i$\\
	&\\[-0.8em]
	\hline
	&\\[-0.8em]
	$\CZ_{ij}$ & Adding row $i$ to row $n+j$ and adding row $j$ to row $n+i$\\
	&\\[-0.8em]
	\hline
	&\\[-0.8em]
	$\SWAP_{ij}$ & Swapping rows $i$ and $j$ and swapping rows $n+i$ and $n+j$
\end{tabular}}
    \caption{Basic row operations corresponding to left multiplication of $M \in \text{Mat}_{2n \times k}(\mathbb{F}_2)$ by the elements of $\Sp(2n, \mathbb{F}_2)$ in the first column.}
    \label{tab:rowoperations}
\end{table}

Similarly, multiplying $M \in \text{Mat}_{k \times 2n}(\mathbb{F}_2)$ from the right by an element of $\Sp(2n, \mathbb{F}_2)$ results in basic column operations on the columns of $M$. These column operations are summarized in Table \ref{tab:columnoperations}.

\begin{table}[h]
    \centering
    {\begin{tabular}{l|l}
    Element of $\Sp(2n, \mathbb{F}_2)$ \ & Column operation \\ &\\[-0.8em]
	\hline \hline
	&\\[-0.8em]
    $H_i$ & Swapping columns $i$ and $n+i$\\
    &\\[-0.8em]
	\hline
	&\\[-0.8em]
    $S_i$ & Adding column $n+i$ to column $i$\\
    &\\[-0.8em]
	\hline
	&\\[-0.8em]
    $\CNOT_{ij}$ & Adding column $j$ to column $i$ and adding column $n+i$ to column $n+j$\\
    &\\[-0.8em]
	\hline
	&\\[-0.8em]
	$S_{ij} = (S_j\CNOT_{ij})^2$ & Adding columns $n+i$ and $n+j$ to column $i$ and adding column $n+i$ to \\ & column $j$\\
	&\\[-0.8em]
	\hline
	&\\[-0.8em]
	$\CZ_{ij}$ & Adding column $n+i$ to column $j$ and adding column $n+j$ to column $i$\\
	&\\[-0.8em]
	\hline
	&\\[-0.8em]
	$\SWAP_{ij}$ & Swapping columns $i$ and $j$ and swapping columns $n+i$ and $n+j$
\end{tabular}}
    \caption{Basic column operations corresponding to right multiplication of $M \in \text{Mat}_{k \times 2n}(\mathbb{F}_2)$ by the elements of $\Sp(2n, \mathbb{F}_2)$ in the first column.}
    \label{tab:columnoperations}
\end{table}

Note that we are particularly interested in the cases $k = 2n$ (Lemma \ref{thm:distsubgroupgen}, Lemma \ref{lemma:nfoldwernerreduction}) and $k = 1$ (Theorem \ref{thm:orderDn}), although Table \ref{tab:rowoperations} and Table \ref{tab:columnoperations} hold true for any $k \geq 1$. 

\section{Concatenated DEJMPS protocols}\label{sec:dejmpsprotocols}
Here we describe the distillation protocols which we compare our results with. These are all based on the so-called DEJMPS protocol~\cite{Deutsch1996}. The DEJMPS protocol takes two pairs of Bell-diagonal states, and outputs one state. It performs bilocal single-qubit rotations on both pairs, then a bilocal CNOT, and finally a measurement on one of the pairs where a success is achieved only when correlated outcomes are observed. It is clear that the DEJMPS protocol is an example of a bilocal Clifford protocol. The DEJMPS protocol can be generalised to a number of pairs $n>2$ by applying the DEJMPS protocol multiple times.

Since the DEJMPS protocol corresponds to 2-1 distillation, the possible ways of combining the different pairs correspond to the number of binary trees on $n$ unlabeled nodes for an $n$-fold tensor product of input states. Furthermore, for each of the performed DEJMPS protocols (corresponding to each parent of the binary tree), we consider all possible single-qubit rotations. The \emph{concatenated DEJMPS protocols} are then all protocols that arise in this fashion. Note that this class includes well known variants of DEJMPS such as (nested) entanglement pumping protocols~\cite{dur2007entanglement, dur1999quantum, briegel1998quantum} or double selection~\cite{fujii2009entanglement}.

\section{Distillation circuits}
\label{sec:distcircuits}
In this section we are concerned with finding circuits that achieve the highest fidelity for $n = 4$ to $8$ for an $n$-fold tensor product of a Werner state\footnote{We are not interested in the cases $n=2$ and $n=3$, since for those cases the concatenated DEJMPS protocols are optimal.}.

We first note that one could use techniques for general Clifford circuit decompositions to decompose the symplectic matrices of the form in \ref{lemma:nfoldwernerreduction}. However, we found that this would in general lead to circuits with high depths. 
Instead, we first find that any distillation protocol has a circuit in a given form. Then, we randomly generate circuits of that form, until we find circuits that achieve the highest fidelity, and have small depth.

\subsection{Reducing the circuit search space}
We use the Bruhat decomposition from~\cite{Bravyi2020, Maslov2018}, which allows to write any Clifford circuit $C$ in the form $C = FWF'$, with $F$ and, $F'$ elements of the so-called Borel subgroup\footnote{We use a different convention from~\cite{Bravyi2020, Maslov2018}, where the target index for the $\CNOT$ gates in the Borel subgroup is larger than the control index.} and $W$ a layer of Hadamard gates followed by a permutation $\sigma \in \mathcal{S}_n$. The Borel subgroup $\mathcal{B}_n$ is generated by $X_i$, $\Sgate_i$ for $1 \leq i \leq n$ and $\CNOT_{ij}$ with $1\leq j < i \leq n$. For convenience, we denote such $\CNOT$ gates as $\CNOT^{\uparrow}$ gates. Now note that the Borel subgroup  $\mathcal{B}_n$ is a subgroup of the distillation subgroup $\mathcal{D}_n$. This implies that the $F$ part of any circuit in the form $C = FWF'$ does not change the distillation statistics. Thus, any distillation protocol has a corresponding circuit of the form $WF'$. Furthermore, since the distillation subgroup $\mathcal{D}_n$ contains elements that arbitrarily permute qubits $2$ to $n$, we can restrict to permutations that are either the identity, or exchange qubit $1$ with $j$. In practice, we have found that it is sufficient to only consider $W = \Hgate_2\Hgate_3\ldots H_n$.

By the results from~\cite{Bravyi2020}, any element $F'$ from $\mathcal{B}_n$ can be written as a layer of $\CNOT^\uparrow$ gates, a layer of $\CZ$ gates, a layer of phase gates and a layer of Pauli gates. Firstly, the Pauli gates are in the kernel of $\phi$, and thus can be left out. Secondly, the layer of $S$ gates can be moved to the beginning. To see this, first note that phase gates commute with $\CZ$ gates. Then, since $\CNOT_{ij}\Sgate_i = \Sgate_i\CNOT_{ij}$ and $\Sgate_j\CNOT_{ij} = \CNOT_{ij}\CZ_{ij}\Sgate_i\Sgate_j$, the layer of $\Sgate$ gates can be moved to the beginning.
Since Werner states are invariant under $\Sgate$, the layer of $\Sgate$ gates can be removed without changing the distillation statistics. 
In the above process of moving the $\Sgate$ gates to the beginning, the layer that only had $\CNOT^\uparrow$ gates will now have $\CZ$ gates as well. In the binary picture we have the following identities,
\begin{gather}
\CNOT_{ij}\CZ_{kl} = \CZ_{kl}\CNOT_{ij}\ ,\\
\CNOT_{ij}\CZ_{ij} = \CZ_{ij}\CNOT_{ij}\ ,\\ \CNOT_{ik}\CZ_{ij} = \CZ_{ij}\CNOT_{ik}\ ,\\
\CNOT_{ik}\CZ_{jk} = \CZ_{ij}\CZ_{jk}\CNOT_{ik}\ ,
\end{gather}
where the $i, j, k, l$ are assumed to be distinct, and can be verified using Tables \ref{tab:rowoperations} and \ref{tab:columnoperations}. By using the above identities, the $\CZ$ gates can be moved through to the original layer of $\CZ$ gates.

It is thus sufficient to consider only elements $F'$ that consist of a layer of $\CNOT^\uparrow$ gates and a layer of $\CZ$ gates. Now, to find circuits we randomly generate circuits consisting of a layer of $\CNOT^\uparrow$ gates, a layer of $\CZ$ gates, and a Hadamard gate on all qubits except the first. We found several circuits that achieved the largest fidelity, and choose the one with smallest depth. We report the found circuits in Fig.~\ref{fig:opt_prots}.

\begin{figure}
\begin{subfigure}{.33\textwidth}
\[
  \Qcircuit @C=.7em @R=.7em {
         	&\qw &\targ 	&\qw 	        &\ctrl{1}  		&\qw    	    &\qw  	&\qw    &\qw   	    &\qw			\\ 
        	&\qw &\qw       &\ctrl{1}       &\control\qw    &\qw            &\qw  	&\qw	& \gate{H} & \measureD{Z}  	\\ 
        	&\qw &\qw  	    &\control\qw    &\qw          	&\ctrl{1}	    &\qw  	&\qw 	& \gate{H} & \measureD{Z}	\\ 
        	&\qw &\ctrl{-3} &\qw    	    &\qw		    &\control\qw    &\qw  	&\qw  	& \gate{H} & \measureD{Z}		
} 
\]
\caption{$n=4$, $\#$2-qubit gates $= 4$, depth $=3$.}
\label{fig:opt_prot_n=4}
\end{subfigure}%
\begin{subfigure}{.33\textwidth}
\[
  \Qcircuit @C=.7em @R=.7em {
        & 	&\qw 	&\targ	 	&\targ		&\qw		&\ctrl{2}	    &\qw		    &\qw		    &\qw	        &\qw	&\qw	&\qw        &\qw			\\ 
        &	&\qw  	&\qw		&\qw		&\qw		&\qw		    &\ctrl{3}	    &\ctrl{1}	    &\qw	        &\qw	&\qw	& \gate{H}  &\measureD{Z}  	\\ 
        &	&\qw  	&\ctrl{-2}  &\qw  	    &\targ		&\control\qw	&\qw	        &\control\qw	&\qw	        &\qw	&\qw 	& \gate{H}  &\measureD{Z}	\\ 
        &	&\qw  	&\qw	    &\qw  	    &\ctrl{-1}	&\qw		    &\qw		    &\qw    		&\ctrl{1}       &\qw	&\qw	& \gate{H}  &\measureD{Z}	\\ 
        &	&\qw  	&\qw 		&\ctrl{-4} 	&\qw	    &\qw		    &\control\qw	&\qw        	&\control\qw    &\qw	&\qw	& \gate{H}  &\measureD{Z}		
} 
\]
\caption{$n=5$, $\#$2-qubit gates $=7$, depth $=5$.}
\label{fig:opt_prot_n=5}
\end{subfigure}
\begin{subfigure}{0.33\textwidth}
\centering 
\[
  \Qcircuit @C=.7em @R=.7em {
        & 	&\qw 	&\targ	 	&\qw		&\targ		&\qw		&\qw		    &\qw		    &\ctrl{2}	    &\qw		    &\qw &\qw   &\qw	    &\qw			\\ 
        &	&\qw  	&\qw		&\targ		&\qw		&\qw		&\qw		    &\ctrl{1}	    &\qw		    &\ctrl{3}	    &\qw &\qw	& \gate{H}  &\measureD{Z}  	\\ 
        &	&\qw  	&\ctrl{-2}	&\ctrl{-1}  &\qw		&\targ		&\qw		    &\control\qw	&\control\qw	&\qw		    &\qw &\qw   & \gate{H}  &\measureD{Z}	\\ 
        &	&\qw  	&\qw		&\qw	  	&\qw		&\ctrl{-1}	&\qw		    &\qw		    &\qw		    &\qw		    &\qw &\qw   & \gate{H}  &\measureD{Z}	\\ 
        &	&\qw  	&\qw 		&\qw 		&\ctrl{-4}	&\qw		&\ctrl{1}	    &\qw		    &\qw		    &\control\qw	&\qw &\qw   & \gate{H}  &\measureD{Z}	\\	
        &	&\qw  	&\qw		&\qw	  	&\qw		&\qw		&\control\qw	&\qw		    &\qw		    &\qw		    &\qw &\qw   & \gate{H}  &\measureD{Z}	 
} 
\]
\caption{$n=6$, $\#$2-qubit gates $ = 8$, depth $=6$}
\label{fig:opt_prot_n=6}
\end{subfigure}

\begin{subfigure}{0.45\textwidth}
\centering 
\[
  \Qcircuit @C=.7em @R=.7em {
        & 	&\qw 	&\qw	 	&\targ		&\qw		&\targ		&\qw		    &\qw		    &\ctrl{2}	    &\qw		    &\qw		    &\qw		    &\qw		    &\qw &\qw   &\qw	    &\qw			\\ 
        &	&\qw  	&\qw		&\qw		&\qw		&\ctrl{-1}	&\qw		    &\ctrl{4}	    &\qw		    &\ctrl{2}	    &\qw		    &\qw		    &\qw		    &\qw &\qw	&\gate{H}  &\measureD{Z}  	\\ 
        &	&\qw  	&\qw  		&\ctrl{-2}  &\targ		&\qw		&\qw		    &\qw		    &\control\qw	&\qw		    &\ctrl{4}	    &\ctrl{1}	    &\qw		    &\qw &\qw   &\gate{H}  &\measureD{Z}	\\ 
        &	&\qw  	&\targ		&\qw	  	&\qw		&\qw		&\qw		    &\qw		    &\qw		    &\control\qw	&\qw		    &\control\qw	&\qw		    &\qw &\qw   &\gate{H}  &\measureD{Z}	\\ 
        &	&\qw  	&\ctrl{-1}	&\qw	  	&\ctrl{-2}	&\qw		&\qw		    &\qw		    &\qw		    &\qw		    &\qw		    &\qw		    &\ctrl{1}	    &\qw &\qw   &\gate{H}  &\measureD{Z}	\\ 
        &	&\qw  	&\qw		&\qw	  	&\qw		&\qw		&\ctrl{1}	    &\control\qw	&\qw		    &\qw		    &\qw		    &\qw		    &\control\qw	&\qw &\qw   &\gate{H}  &\measureD{Z}	\\ 
        &	&\qw  	&\qw 		&\qw 		&\qw		&\qw		&\control\qw	&\qw		    &\qw		    &\qw		    &\control\qw	&\qw		    &\qw		    &\qw &\qw   &\gate{H}  &\measureD{Z}		
} 
\]
\caption{$n=7$, $\#$2-qubit gates $=11$, depth $=6$.}
\label{fig:opt_prot_n=7}
\end{subfigure}%
\begin{subfigure}{0.45\textwidth}
\centering 
\[
  \Qcircuit @C=.7em @R=.7em {
        & 	&\qw 	&\qw	 	&\qw	  	&\qw		&\qw		&\targ		&\qw		&\qw		&\qw		    &\ctrl{6}	    &\qw		    &\qw		    &\qw		    &\qw		    &\qw &\qw   &\qw	    &\qw			\\ 
        &	&\qw  	&\qw		&\qw	  	&\targ		&\qw		&\qw		&\qw		&\qw		&\qw		    &\qw		    &\qw		    &\qw		    &\qw		    &\ctrl{2}	    &\qw &\qw	&\gate{H}   &\measureD{Z}  	\\ 
        &	&\qw  	&\targ		&\qw	  	&\ctrl{-1}  &\qw		&\qw		&\qw		&\qw		&\ctrl{2}	    &\qw		    &\qw    	    &\qw	        &\ctrl{4}		&\qw		    &\qw &\qw   &\gate{H}   &\measureD{Z}	\\ 
        &	&\qw  	&\qw		&\qw	  	&\qw	  	&\targ		&\ctrl{-3}	&\qw		&\targ		&\qw		    &\qw		    &\qw		    &\ctrl{3}		&\qw	        &\control\qw	&\qw &\qw   &\gate{H}   &\measureD{Z}	\\ 
        &	&\qw  	&\qw		&\qw	  	&\qw	  	&\qw		&\qw		&\qw		&\qw		&\control\qw    &\qw		    &\ctrl{1}   	&\qw		    &\qw		    &\qw		    &\qw &\qw   &\gate{H}   &\measureD{Z}	\\ 
        &	&\qw  	&\qw		&\targ  	&\qw	  	&\qw		&\qw		&\qw		&\ctrl{-2}	&\qw	        &\qw		    &\control\qw    &\qw		    &\qw		    &\qw		    &\qw &\qw   &\gate{H}   &\measureD{Z}	\\ 
        &	&\qw  	&\qw    	&\ctrl{-1} 	&\qw	  	&\qw		&\qw		&\targ		&\qw		&\qw		    &\control\qw    &\qw		    &\control\qw	&\control\qw	&\qw		    &\qw &\qw   &\gate{H}   &\measureD{Z}	\\ 
        &	&\qw  	&\ctrl{-5}	&\qw	  	&\qw 		&\ctrl{-4}	&\qw		&\ctrl{-1}	&\qw		&\qw		    &\qw		    &\qw		    &\qw		    &\qw		    &\qw		    &\qw &\qw   &\gate{H}   &\measureD{Z}		
} 
\]
\caption{$n=8$, $\#$2-qubit gates $=13$, depth $=7$.}
\label{fig:opt_prot_n=8}
\end{subfigure}
\caption{Circuits that achieve the maximum fidelity for $n$. These circuits are applied by both Alice and Bob, after which they measure the last $n-1$ qubits, and communicate their outcomes to each other. When the outcomes for all individual qubit pairs are correlated, the distillation protocol was deemed successful.}
\label{fig:opt_prots}
\end{figure}
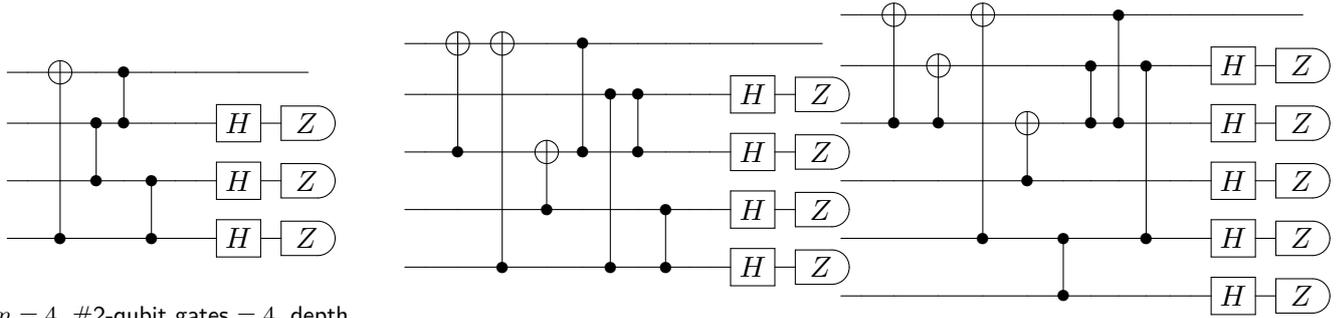
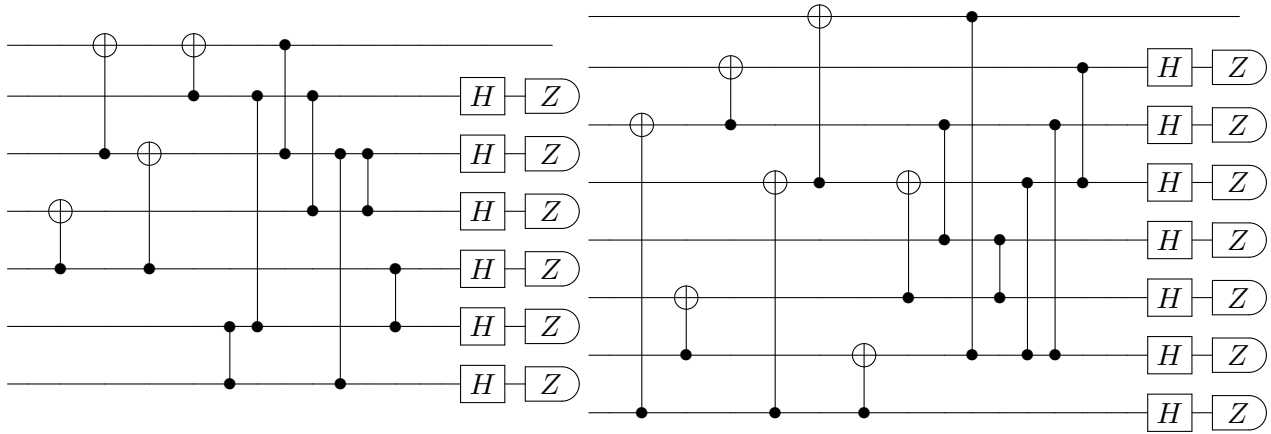

\FloatBarrier
\section{Analytical expressions}\label{sec:diststatistics}
We report here the analytical expressions of the fidelity and success probability that correspond to the found optimal schemes. The input state is an $n$-fold tensor product of a Werner state with fidelity $F$. For completeness, we report here as well the distillation statistics expressed in the infidelity $\epsilon \equiv 1-F$, and the scaling of the output fidelity as a function of the infidelity.
\vspace{5mm}

\begin{table}[h]
    \centering
\scalebox{1.1}{
	\begin{tabular}{c || c |}
		$n$ & $p_\textrm{suc}$\\
		&\\[-0.8em]
		\hline
		&\\[-0.8em]
 $2$ & $\frac{8}{9}F^2-\frac{4}{9}F+\frac{5}{9}$\\
 &\\[-0.8em]
		\hline
		&\\[-0.8em]
		$3$ & $\frac{32}{27}F^3-\frac{4}{9}F^2+\frac{7}{27}$\\
		&\\[-0.8em]
		\hline
		&\\[-0.8em]
 $4$ & $\frac{32}{27}F^4-\frac{4}{9}F^2+\frac{4}{27}F+\frac{1}{9}$\\
 &\\[-0.8em]
		\hline
		&\\[-0.8em]
		$5$ & $ \frac{80}{27}F^4 - \frac{80}{27}F^3 + \frac{10}{9}F^2 - \frac{5}{27}F + \frac{2}{27}$\\
			&\\[-0.8em]
		\hline
		&\\[-0.8em]
				$6$ & $  \frac{128}{243}F^6 + \frac{320}{243}F^5 - \frac{256}{243}F^4 + \frac{16}{243}F^3 + \frac{40}{243}F^2 - \frac{14}{243}F + \frac{1}{27}$\\
		&\\[-0.8em]
		\hline
		&\\[-0.8em]
						$7$ & $   \frac{2048}{2187}F^7 - \frac{128}{2187}F^6 + \frac{320}{729}F^5 - \frac{796}{2187}F^4 - \frac{44}{2187}F^3 + \frac{49}{729}F^2 - \frac{37}{2187}F + \frac{37}{2187}$\\
		&\\[-0.8em]
		\hline
		&\\[-0.8em]
						$8$ & $    \frac{6656}{6561}F^8 - \frac{1024}{6561}F^7 + \frac{1664}{6561}F^6 - \frac{64}{6561}F^5 - \frac{1120}{6561}F^4 + \frac{416}{6561}F^3 - \frac{4}{6561}F^2 - \frac{16}{6561}F + \frac{53}{6561}$\\
			&\\[-1.3em]
	\end{tabular}}
	\vspace{3.5mm}
\caption{Success probability for the protocols with the highest output fidelity for $n=2$ to $8$.}
	\label{table:distillationstatistics}
\end{table}

\begin{table}[h]
    \centering
\scalebox{1.1}{
	\begin{tabular}{c || c | c }
		$n$  & $p_\textrm{suc}\cdot F_\textrm{out}$ \\
		&\\[-0.8em]
		\hline
		&\\[-0.8em]
 $2$ & $\frac{10}{9}F^2-\frac{2}{9}F+\frac{1}{9}$\\
 &\\[-0.8em]
		\hline
		&\\[-0.8em]
		$3$ & $\frac{28}{27}F^3-\frac{1}{9}F+\frac{2}{27}$\\
		&\\[-0.8em]
		\hline
		&\\[-0.8em]
 $4$ & $\frac{8}{9}F^4+\frac{8}{27}F^3-\frac{2}{9}F^2+\frac{1}{27}$\\
 &\\[-0.8em]
		\hline
		&\\[-0.8em]
		$5$ & $ \frac{32}{27}F^5 - \frac{20}{27}F^4 + \frac{10}{9}F^3 - \frac{20}{27}F^2 + \frac{5}{27}F$ \\
			&\\[-0.8em]
		\hline
		&\\[-0.8em]
				$6$ & $ \frac{32}{27}F^6 - \frac{112}{243}F^5 + \frac{80}{243}F^4 + \frac{8}{243}F^3 - \frac{32}{243}F^2 + \frac{10}{243}F + \frac{1}{243}$\\
		&\\[-0.8em]
		\hline
		&\\[-0.8em]
						$7$ & $ \frac{2368}{2187}F^7 - \frac{592}{2187}F^6 + \frac{196}{729}F^5 - \frac{44}{2187}F^4 - \frac{199}{2187}F^3 + \frac{20}{729}F^2 - \frac{2}{2187}F + \frac{8}{2187}$\\
		&\\[-0.8em]
		\hline
		&\\[-0.8em]
						$8$ & $  \frac{6784}{6561}F^8 - \frac{51}{6561}F^7 - \frac{32}{6561}F^6 + \frac{832}{6561}F^5 - \frac{560}{6561}F^4 - \frac{8}{6561}F^3 + \frac{52}{6561}F^2 - \frac{8}{6561}F + \frac{13}{6561}$\\
			&\\[-1.3em]
	\end{tabular}}
	\vspace{3.5mm}
	\caption{Product of the success probability and the output fidelity for the protocols with the highest output fidelity for $n=2$ to $8$.}
	\label{table:distillationstatistics2}
\end{table}

\begin{table}[h]
    \centering
\scalebox{1.27}{
	\begin{tabular}{c || c |}
		$n$ & $p_\textrm{suc}$\\
		&\\[-0.8em]
		\hline
		&\\[-0.8em]
 $2$ & $1-\frac{4}{3}\epsilon+\frac{8}{9}\epsilon^2$\\
 &\\[-0.8em]
		\hline
		&\\[-0.8em]
		$3$ & $1-2\epsilon+\frac{4}{3}\epsilon^2$\\
		&\\[-0.8em]
		\hline
		&\\[-0.8em]
 $4$ & $1-2\epsilon+\frac{4}{3}\epsilon^2-\frac{8}{27}\epsilon^3$\\
 &\\[-0.8em]
		\hline
		&\\[-0.8em]
		$5$ & $ 1-\frac{14}{3}\epsilon + \frac{28}{3}\epsilon^2 - \frac{256}{27}\epsilon^3 + \frac{400}{81}\epsilon^4 - \frac{256}{243}\epsilon^5$\\
			&\\[-0.8em]
		\hline
		&\\[-0.8em]
				$6$ & $  1-5\epsilon + \frac{32}{3}\epsilon^2 - 12 \epsilon^3 + \frac{608}{81}\epsilon^4 - \frac{608}{243}\epsilon^5 + \frac{256}{729}\epsilon^6$\\
		&\\[-0.8em]
		\hline
		&\\[-0.8em]
						$7$ & $   1-7\epsilon +\frac{190}{9}\epsilon^2 - \frac{944}{27}\epsilon^3 + \frac{928}{27}\epsilon^4 - \frac{544}{27}\epsilon^5 + \frac{1600}{243}\epsilon^6 - \frac{2048}{2187}\epsilon^7$\\
		&\\[-0.8em]
		\hline
		&\\[-0.8em]
						$8$ & $    1-\frac{23}{3}\epsilon +\frac{244}{9}\epsilon^2 - \frac{1540}{27}\epsilon^3 + \frac{6280}{81}\epsilon^4 - \frac{16832}{243}\epsilon^5 + \frac{28768}{729}\epsilon^6 - \frac{9472}{729}\epsilon^7 + \frac{4096}{2187}\epsilon^8$\\
			&\\[-1.3em]
	\end{tabular}}
	\vspace{3.5mm}
\caption{Success probability for the protocols with the highest output fidelity for $n=2$ to $8$, expressed in the infidelity $\epsilon\equiv 1-F$.}
	\label{table:distillationstatisticsinfidelity}
\end{table}

\begin{table}[h]
    \centering
\scalebox{1.27}{
	\begin{tabular}{c || c | c }
		$n$  & $p_\textrm{suc}\cdot F_\textrm{out}$ \\
		&\\[-0.8em]
		\hline
		&\\[-0.8em]
 $2$ & $1-2\epsilon + \frac{10}{9}\epsilon^2$\\
 &\\[-0.8em]
		\hline
		&\\[-0.8em]
		$3$ & $1-3\epsilon+\frac{10}{3}\epsilon^2-\frac{4}{3}\epsilon^3$\\
		&\\[-0.8em]
		\hline
		&\\[-0.8em]
 $4$ & $1-3\epsilon+\frac{10}{3}\epsilon^2-\frac{44}{27}\epsilon^3+\frac{8}{27}^4$\\
 &\\[-0.8em]
		\hline
		&\\[-0.8em]
		$5$ & $ 1-5\epsilon +\frac{92}{9}\epsilon^2-\frac{284}{27}\epsilon^3 + \frac{440}{81}\epsilon^4-\frac{272}{243}\epsilon^5$ \\
			&\\[-0.8em]
		\hline
		&\\[-0.8em]
				$6$ & $ 1-\frac{17}{3}\epsilon+\frac{122}{9}\epsilon^2-\frac{466}{27}\epsilon^3+\frac{992}{81}\epsilon^4-\frac{1112}{243}\epsilon^5+\frac{512}{729}\epsilon^6$\\
		&\\[-0.8em]
		\hline
		&\\[-0.8em]
						$7$ & $ 1-7\epsilon+\frac{190}{9}\epsilon^2-\frac{320}{9}\epsilon^3+\frac{2936}{81}\epsilon^4-\frac{1816}{81}\epsilon^5+\frac{5680}{729}\epsilon^6-\frac{2560}{2187}\epsilon^7$\\
		&\\[-0.8em]
		\hline
		&\\[-0.8em]
						$8$ & $  1-8\epsilon + \frac{259}{9}\epsilon^2 - \frac{544}{9}\epsilon^3 + \frac{2180}{27}\epsilon^4 - \frac{17000}{243}\epsilon^5 + \frac{27872}{729}\epsilon^6 - \frac{2912}{243}\epsilon^7 + \frac{3584}{2187}\epsilon^8$\\
			&\\[-1.3em]
	\end{tabular}}
	\vspace{3.5mm}
	\caption{Product of the success probability and the output fidelity for the protocols with the highest output fidelity for $n=2$ to $8$, expressed in the infidelity $\epsilon\equiv 1-F$.}
	\label{table:distillationstatistics2infidelity}
\end{table}

\begin{table}[h]
    \centering
\scalebox{1.27}{
	\begin{tabular}{c || c | c }
		$n$  & $F_\textrm{out}$ \\
		&\\[-0.8em]
		\hline
		&\\[-0.8em]
 $2$ & $1-\frac{2}{3}\epsilon -\mathcal{O}\left(\epsilon^2\right)$\\
 &\\[-0.8em]
		\hline
		&\\[-0.8em]
		$3$ & $1-\frac{1}{3}\epsilon -\mathcal{O}\left(\epsilon^2\right)$\\
		&\\[-0.8em]
		\hline
		&\\[-0.8em]
 $4$ & $1-\frac{2}{3}\epsilon^2-\mathcal{O}\left(\epsilon^3\right)$\\
 &\\[-0.8em]
		\hline
		&\\[-0.8em]
		$5$ & $ 1-\frac{10}{9}\epsilon^3 -\mathcal{O}\left(\epsilon^4\right)$ \\
			&\\[-0.8em]
		\hline
		&\\[-0.8em]
				$6$ & $ 1-\frac{8}{9}\epsilon^3 -\mathcal{O}\left(\epsilon^4\right)$\\
		&\\[-0.8em]
		\hline
		&\\[-0.8em]
						$7$ & $ 1-\frac{13}{27}\epsilon^3 -\mathcal{O}\left(\epsilon^4\right)$\\
		&\\[-0.8em]
		\hline
		&\\[-0.8em]
						$8$ & $ 1-\frac{8}{27}\epsilon^3 -\mathcal{O}\left(\epsilon^4\right)$\\
			&\\[-1.3em]
	\end{tabular}}
	\vspace{3.5mm}
	\caption{Scaling of the output fidelity around $\epsilon\approx0$ for the protocols with the highest output fidelity for $n=2$ to $8$, where $\epsilon\equiv 1-F$.}
	\label{table:distillationstatisticsinfidelityseries}
\end{table}

\end{document}